\documentclass[%
 reprint,
 amsmath,amssymb,
 aps,
 reprint,
 amsmath,amssymb,
 aps]{revtex4-2}  
\usepackage{euscript}
\usepackage{mathtools}  
\usepackage[dvipsnames]{xcolor}
\usepackage{graphicx}
\usepackage{setspace}
\usepackage{dcolumn}
\usepackage{adjustbox}
\PassOptionsToPackage{hyphens}{url}
\usepackage{tikz}
\usetikzlibrary{automata, positioning, arrows}
\usepackage{xcolor,colortbl}
\usepackage{xurl}
\usetikzlibrary{arrows.meta}
\usepackage{bm}

\setcounter{secnumdepth}{4}
\setcounter{tocdepth}{4}
\DeclareMathAlphabet{\mathpzc}{OT1}{pzc}{m}{it}
\usepackage{array}
\usepackage{makecell}
\usepackage{mathtools}

\usepackage{float}
\usepackage{mathrsfs}
\usepackage{amsthm}
\newtheorem{theorem}{Theorem}[section]

\newtheorem{lemma}[theorem]{Lemma}

\usepackage{geometry}
\geometry{top=1cm,bottom=1.4cm,left=1.4cm,right=1.4cm,includehead,includefoot}
\setlength{\columnsep}{6mm} 
\makeatletter

\usepackage{csquotes}

\definecolor{purpleheart}{rgb}{0.3, 0.24, 0.64}
\definecolor{persianindigo}{rgb}{0.2, 0.07, 0.48}
\definecolor{royalazure}{rgb}{0.0, 0.22, 0.66}
\definecolor{lightpastelblue}{rgb}{0.4, 0.5, 0.95}
\definecolor{darkpastelred}{rgb}{0.56, 0.23, 0.25}
\definecolor{darkred}{rgb}{0.55, 0.0, 0.0}
\definecolor{liprose}{rgb}{0.75, 0.5, 0.51}
\definecolor{green(colorwheel)(x11green)}{rgb}{0.0, 1.0, 0.0}
\definecolor{junglegreen}{rgb}{0.16, 0.67, 0.53}

\definecolor{paleblue}{rgb}{0.69, 0.93, 0.93}

\definecolor{chestnut}{rgb}{0.8, 0.36, 0.36}

\definecolor{coralred}{rgb}{1.0, 0.25, 0.25}

\definecolor{egyptianblue}{rgb}{0.06, 0.23, 0.75}

\definecolor{jasper}{rgb}{0.88, 0.23, 0.24}
\definecolor{bluepigment}{rgb}{0.2, 0.2, 0.6}
\definecolor{burgundy}{rgb}{0.5, 0.0, 0.13}
\definecolor{colblue}{rgb}{0.2, 0.66, 0.74}
\definecolor{darkturquoise}{rgb}{0.36, 0.54, 0.95}
\definecolor{lavenderblue}{rgb}{0.5, 0.5, 0.98}
\definecolor{tractorred}{rgb}{0.80, 0.20, 0.20}

\usepackage{yfonts}

\usepackage{etoolbox}
\usepackage{hyphenat}

\newcommand\vartextvisiblespace[1][.5em]{%
  \makebox[#1]{%
    \kern.07em
    \vrule height.3ex
    \hrulefill
    \vrule height.3ex
    \kern.07em
  }
}
\usepackage{pgfornament}
\usepackage{enumitem}
\usepackage[ colorlinks=true,linkcolor=blue,urlcolor=blue,citecolor=blue,pdfusetitle]{hyperref}
\usepackage{color}
\newcommand\test[1]{%
  \language\csname l@#1\endcsname
  \parbox[t]{0pt}{\hspace{0pt}everywhere}%
}
\hyphenation{par-ti-tion-ing}
\hyphenation{lan-guage}
\usepackage{microtype}

\usepackage[noline]{algorithm2e}

\SetKwFor{For}{for (}{) $\lbrace$}{$\rbrace$}

\usepackage[capitalize]{cleveref}   
\newcommand{\eq}[1]{\begin{align}#1\end{align}}
\usetikzlibrary{matrix,positioning}
\usepackage{setspace}

\usepackage{bm}
\newcommand{\tvx}{\tilde{{\pmb{x}}}}
\newcommand{\vx}{{\pmb{x}}}

\newcommand{\vu}{{\pmb{u}}}
\newcommand{\vv}{{\pmb{v}}}
\newcommand{\tvu}{\tilde{{\pmb{u}}}}
\newcommand{\tvv}{\tilde{{\pmb{v}}}}

\usepackage{listings}
\usepackage{blindtext}
\usepackage{bm}

\begin{document}
\title{Inclusive Thermodynamics of Computational Machines}

\author{Gülce Kardeş}
\email{gulcekardes@gmail.com}
\affiliation{
University of Colorado, Boulder, USA
 }

\author{David Wolpert}
\affiliation{
Santa Fe Institute, USA
} 
\altaffiliation[Also at ]{Complexity Science Hub Vienna, Austria; Arizona State University, USA; International Centre for Theoretical Physics, Italy.
}

\begin{abstract}
We introduce a framework designed to analyze the thermodynamics of an abstractly defined logical computer like a deterministic finite automaton (DFA) or a Turing machine, without specifying any extraneous parameters (like rate matrices, Hamiltonians, etc.) of a physical process that implements the computer. Earlier investigations of how to do this were based on the continuous-time Markov chain (CTMC) formulation of stochastic thermodynamics. These investigations either assumed that there was exactly zero irreversible entropy production (EP) generated by the physical system implementing the computation, or allowed the EP to be nonzero but only considered the “mismatch cost” component of the EP. In addition, they only applied to a single type of computer. Our framework neither requires that EP equal zero nor restricts attention to the mismatch cost component of EP, and is designed to apply to all types of computational machines. In contrast to earlier investigations using the CTMC-based formulation, our framework is based on the inclusive Hamiltonian formulation, in which the combination of the system of interest and the baths evolve in a Hamiltonian (or unitary) dynamics. Here, we use our framework to derive an integral fluctuation theorem for computers, in which the expectation value is strictly less than 1. We also derive an exchange fluctuation theorem, and a mismatch cost formula involving first-passage times. We analyze the EP generated by a DFA, a Markov information source, and a noisy communication channel. In particular, we use the Myhill-Nerode theorem of computer science to prove that out of all DFAs which recognize the same language, the “minimal complexity DFA” is the one with minimal EP for all dynamics and at all iterations.

\end{abstract}

\maketitle
\section{Introduction}

\label{sec:introduction}
\subsection{Background}

The thermodynamic costs of computation has been a central topic of concern for physicists and mathematicians for over a century. Early work ranges from Szilard’s analyses of Maxwell’s Demon~\cite{1929ZPhy_53_840S} to remarks by von Neumann, in which he argued that a computer operating at temperature $T$ must dissipate at least $kT\ln2$ Joule per elementary bit operation~\cite{Neumann1961JohnVN}. Landauer, Bennett, Zurek, Caves and other collaborators then built on these earlier investigations with a more extended, semi-formal analysis in the mid- to late twentieth century~\cite{Zurek1990ComplexityEA,BENNETT2003501, 10.5555/295385.295402}. 

All these early investigations were based on equilibrium thermodynamics. However, real-world computers almost always operate (extremely) far from thermodynamic equilibrium. This indicates that a more complete and detailed understanding of the thermodynamics of computation, extending beyond the analyses of the last century, must involve a formalism explicitly designed to apply to non-equilibrium systems. 

Fortunately, the last two decades have witnessed substantial advances which have extended statistical physics to include systems operating arbitrarily far from equilibrium. One of the core ideas underlying these recent
advances is to extend the definitions of thermodynamic quantities to the level of individual trajectories
of a system, i.e., to define work, heat, etc., for individual samples of the stochastic process governing dynamics of that system \cite{Sekimoto1997ComplementarityRF}. This has allowed the derivation of powerful ``fluctuation theorems'' (FTs)~\cite{Crooks_1999,JarzynskiHamiltonian, peliti2021stochastic,Esposito2010ThreeDF} that govern the probability density function of
how much work is dissipated in a process. More recent results include ``speed limit theorems" bounding how fast a thermodynamic system can change its state distribution by the amount of dissipated work it produces~\cite{Shiraishi2018SpeedLF, Funo_2019,PhysRevLett.123.110603, https://doi.org/10.48550/arxiv.2108.04261}. Similarly,  ``thermodynamic uncertainty relations" (TURs~\cite{Horowitz2019ThermodynamicUR,Liu2020ThermodynamicUR,https://doi.org/10.48550/arxiv.2101.01610,Hasegawa2019GeneralizedTU,PhysRevLett.114.158101, PhysRevLett.125.120604}) bound the  statistical precision of \textit{any} type of current within a system by the amount of dissipated work it generates. Other recent results include various bounds relating dissipated work to stopping times and first-passage times~\cite{PhysRevX.7.011019,Manzano2021ThermodynamicsOG,Gingrich2017FundamentalBO,PhysRevLett.125.120604,Garrahan_2017}. 

This new field is called 
stochastic thermodynamics (ST), and has two main 
approaches. Most of the research in ST has been based on
considering systems of interest (SOIs)
that are coupled to one or more infinite external reservoirs (also called baths). In this standard approach, the reservoirs are assumed to always be
at thermal equilibrium, e.g., due to separation of timescales or coarse-graining~\cite{Seifert2012StochasticTF,VANDENBROECK20156,peliti2021stochastic,PhysRevX.7.021003}. Therefore there is no dynamic model of the reservoirs, and typically only an indirect model of the coupling of the reservoirs to the SOI, via conditions on the allowed dynamics of the SOI. 
In this approach the SOI itself evolves according to
a continuous-time Markov chain (CTMC) \cite{VANDENBROECK20156}. 

As described below, one of the central results in CTMC-based 
stochastic thermodynamics is a formula for the time-derivative
of the Shannon entropy of distribution $p_t$ over
the states of the SOI at time $t$:
\begin{equation}
\dfrac{dS(p_t)}{dt} = \dot{Q}(p_t) + \dot{\Sigma}(p_t)
\end{equation}
The first term on the RHS is called the ``entropy flow'' rate (EF).
In many settings it can be identified with the rate of heat exchanged between the SOI and the reservoirs. The second term
is called the ``entropy production'' rate (EP). Crucially, it is never negative. (In many scenarios the second law of thermodynamics is a consequence of this non-negativity.) The physical process of the system evolving according to the CTMC is thermodynamically reversible iff the EP rate equals $0$. In general, that can only occur if the process is proceeding semi-statically slowly \cite{VANDENBROECK20156}. 

Another substantial portion of ST research instead adopts an ``inclusive Hamiltonian framework". In this
approach the external reservoirs can be finite or infinite, but they have a finite number of
degrees of freedom. Moreover, it is not assumed that they are always at thermal equilibrium. Instead, a deterministic invertible dynamics is defined over the full physical system, including both the SOI and
the external reservoirs. Often, it is also assumed
that the initial distribution over the joint SOI-reservoirs is a product distribution, i.e., that the SOI and the reservoirs are initialized in statistically independent processes~\cite{JarzynskiHamiltonian,PhysRevLett.122.150603,massimiliano_lindenberg_vandenbroeck,PhysRevLett.123.090604,PhysRevLett.92.230602, PhysRevLett.116.020601, strasberg2015thermodynamics, Talkner2020CommentO}. 

Importantly, this Hamiltonian framework is 
not only applicable to classical systems. It is also
one of the common ways  to model open quantum systems. In
those quantum models, one
explicitly specifies a unitary operator governing the joint dynamics of an SOI together with the
external systems coupled to the SOI, with partial traces used to evaluate the dynamics of the SOI by itself~\cite{Nielsen2000QuantumCA, Breuer2002TheTO}. This model of quantum systems is central to recent research on quantum information processing~\cite{Nielsen2000QuantumCA}. 

Our goal in this paper is to build on this previous work in ST, to construct a formalism 
for analyzing the thermodynamics of arbitrary computational systems that depends solely on
the logical dynamics of those systems, without further specifying any of the low-level details of the physical process
that implements that machine. We want to be able to analyze the thermodynamics of \textit{just the dynamics of the computational machine}, with our conclusions
not changing based on extraneous physical parameters that are not fixed by the dynamics of the computational machine. 

The CTMC-based approach to ST has made some progress
towards this goal. Previous 
research in this category has fallen into two classes:
\begin{enumerate}
\item In the first class,
it is assumed that EP = 0 \cite{Wolpert_2020, Kolchinsky2019ThermodynamicCO, thermo_info}. As mentioned above,
in general this
restricts us to considering systems that are evolving 
infinitesimally slowly. In this situation, the total
EF --- the total heat exchange with the reservoirs --- is the
change in the Shannon entropy. This allows us to derive expressions for the EF
by considering only the logical computation implemented by the
system, together with the initial distribution over
its states, without considering any extraneous
physical parameters; 
\item The second class of investigations
focuses on the EP, ignoring the EF \cite{Kolchinsky2019ThermodynamicCO, https://doi.org/10.48550/arxiv.2208.06895}. Specifically,
these papers focus on what is called the ``mismatch cost'' contribution to the EP. This contribution to the EP
is always non-negative. Like the EF in systems with zero EP,
this contribution to the EP is determined fully by the 
logical computation 
that the system implements, together with the 
initial distribution over
its states, without any dependence on extraneous
physical parameters.
Moreover, this contribution to EP is nonzero 
if we assume the physical system implementing the
computation evolves periodically, with one iteration of
the computation performed in each period. This
is almost always the case in real-world computers.
\end{enumerate}

These analyses have revealed that there are unavoidable trade-offs among the
thermodynamic resources used in 
specific nonequilibrium physical systems,
in particular systems that perform computation. Some of the trade-offs uncovered relate to the speed of a computation, its noise level, and whether the computational system is thermodynamically ``tailored'' to perform a given computational task~\cite{DavidWolpertSTComputation, Wolpert_2020, Kolchinsky2019ThermodynamicCO,brittain2021build,strasberg2015thermodynamics,Freitas2021StochasticTO,Gao2021PrinciplesOL,e21030284}. 

\begin{figure*}[ht]
\centering
   \includegraphics[width=0.919\linewidth]{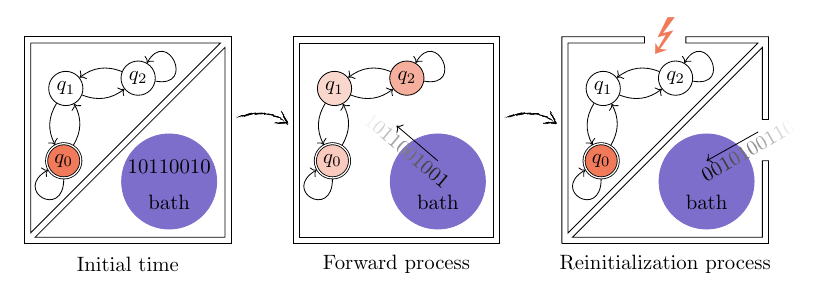} 
   \caption{Schematic depiction of a full thermodynamic cycle of a DFA,  special type of logical computer. The
   cycle starts with
   the DFA in its initial state and the environment's state --- a bit string --- set to a random sample of an associated distribution, modeled
   as a Boltzmann distribution for 
   a bath Hamiltonian. The forward process of the DFA occurs
   as the DFA reads in the input string generated by the  environment. Once the DFA halts, an infinite external reservoir is connected to the DFA, and a second, independent one is connected to the environment. These external reservoirs are used to reinitialize the DFA and
   environment, respectively, in preparation for the next run of the DFA.} 
     \label{fig:hamiltonian}
\end{figure*}

In this paper we show that by adapting the 
Hamiltonian framework, we can analyze
the energetic costs associated with a computational task 
\textit{in toto}, without specifying any extraneous physical parameters
that are not already given in the computer science (CS) theory definition
of the computation. In particular, by using
this version of the Hamiltonian framework
we can avoid the need for making restrictive assumptions 
on the speed of the process (in contrast to (1)) or 
for ignoring components of the energetic
cost (as in (2)). In addition, our
new framework applies to arbitrary computational machines.
Moreover, it is based on the assumption of the 
Hamiltonian framework that the the states of SOI and the
of the bath(s) are independent of one another when
the process starts. This assumption is perfectly suited to analyzing computational machines that 
receive external inputs after they have been initialized --- which is the case in all CS theory.

As a result of these attributes,
the framework we introduce opens the possibility of analyzing the trade-offs among all of the thermodynamic resources that are consumed in any computation, in the  sense of the term
meant in computer science involving 
abstract logical 
variables, rather than 
focus on the thermodynamic resources consumed in
a specific physical
system that implements some specific computation.

Trade-offs among the minimal amounts of resources required to perform a given computation
have not only been considered in ST.
Indeed, such trade-off are a central concern of CS theory. However, traditional CS theory has considered trade-offs among quantities different from those considered in the thermodynamics of computation \cite{Aaronson2005NPcompletePA}. For instance,
one of the most important examples of a trade-off considered in CS theory is the
relation between the amount of memory needed by a computational machine 
to perform a given computation and the number of iterations required to perform that computation \cite{10.5555/1540612}. 

Despite this parallel between the interests of ST 
and CS theory in the trade-offs among
the resource costs involved in computation, very little research has been
done on how the resource costs investigated in ST are related to the resource costs so central to CS. In this paper, after introducing our framework, 
we use it to start to lay the foundations
for investigating that relationship.

\subsection{Reinitialization entropy production}
\label{sec:physics_to_be_cs}

Our starting point is to note that many models of computational systems can be decomposed into two or more interacting subsystems \cite{BOOK1998Elements}. The first is the computational machine itself. Examples of such machines range from deterministic finite automata (DFAs)
to Turing machines (TMs) to concurrent processes. In addition to the computational machine, there are always one or more external processes which provide inputs to and receive outputs from the computational machine.

We physically ground this decomposition by supposing that there are two types of degrees of freedoms
in physical devices that implement computational machines \footnote{In the real world
there will be many
more types of degree of freedom of the full computational system than
just ``accessible'' and ``inaccessible'', i.e., different degrees of
freedom will lead to different \textit{kinds} of thermodynamic interpretations. Here, for simplicity, we assume just the fundamental two.}. First, the {\textbf{accessible}} degrees of freedom are those that arise
directly in the specification of the computational machine executing a well-defined computational task. As an example, in a DFA, one could
take the accessible degrees of freedom to be the computational state of the DFA. Whatever our
computational machine is, we suppose that
there is an engineer who builds a \textbf{physical device}
that implements that computational
machine. More precisely, we suppose that they
build a physical device some of whose physical
variables correspond to the
accessible degrees of freedom of the computational machine. So the dynamics of the physical device implements the computational task across the accessible degrees of freedom.

We refer to the dynamics of the full physical system
as it implements a given computation as
the \textbf{forward process}.
In real world scenarios, it is expected that the engineer will use the physical
device repeatedly. So after every use of the device in the forward process,
the device needs to be reinitialized for its next use, as illustrated in \cref{fig:hamiltonian}. 
We suppose that the engineer has complete freedom
to design how the physical device controls the accessible degrees of freedom during that
re-initialization (hence the name ``accessible'').

The dynamics of the physical device is driven by the external world which provides
stochastic inputs to the device. Supposing that the distribution of those inputs is fully specified, 
at all times there is a well-defined probability
distribution over the accessible degrees of freedom in the physical device. We suppose that the engineer knows exactly what that probability distribution is at the time that the device has completed a computation, and can use that knowledge to design a thermodynamically efficient physical process to re-initialize those accessible degrees of freedom. 

Formally, as described in the following sections, these properties mean that we can lower-bound
the amount of work the engineer will need to \textit{spend} to re-initialize the 
(accessible degrees of freedom in the)
computational machine. This lower bound
is given
by how the non-equilibrium free energy over those degrees of freedom changes, when the
engineer transforms the ending distribution over the
accessible degrees of freedom back to the initial distribution.

In addition to the accessible degrees of freedom, there is also a set of 
degrees of freedom that are \textbf{inaccessible} to the engineer, that comprise the external world which interacts with the computational machine. For example, in the case of a DFA, one could consider the entire string of inputs read into the DFA 
as being inaccessible. We suppose that the engineer knows the distribution
over the inaccessible degrees of freedom when the computation halts, just like they
know the distribution over the accessible degrees of freedom. However, we also suppose that
the engineer has no control over how the inaccessible degrees of freedom are re-initialized.
Formally, as described in the following sections, this means that we can upper-bound
the amount of work the engineer can \textit{extract} when the 
inaccessible degrees of freedom are reinitialized.
This upper bound is given by the amount of heat that would
be produced if those degrees of freedom were re-initialized in an uncontrolled
manner, i.e., by coupling them to an idealized, infinite heat bath whose Boltzmann
distribution is the initial distribution over the inaccessible degrees of freedom~\footnote{In many real world scenarios, the engineer in fact cannot extract \textit{any} work from the re-initialization of the inaccessible degrees of freedom. Here, we are simply
stipulating that the best they could possibly do,
in any model of any computational machine, is extract this heat transferred
in from an infinite, external bath
--- that is essentially our definition of inaccessible degrees of freedom.}. 

The difference between this minimal amount of work that needs to be spent
(to re-initialize the accessible degrees of freedom) and the maximal amount
of work that can be extracted (by re-initializing the inaccessible degrees of
freedom) combine to provide a lower bound on the amount of work that will be \textit{dissipated}
--- irretrievably lost --- every time the computational machine is run. This bound applies independent
of the details of the actual physical system that implements
the computational machine, since by the second law of thermodynamics, 
those details can increase the total amount of dissipated work, but cannot reduce it. 

Importantly, this bound on the expected
disssipated work of the reinitialization 
process exactly equals the expected entropy
production of the forward process~\cite{massimiliano_lindenberg_vandenbroeck}.
Accordingly, we will refer to this bound as
the expected \textbf{reinitialization entropy production} (REP) of the system. 
It is important to emphasize that the REP
depends only on the thermodynamics of re-initializing
the machine after it has completed a computational task. It does not reflect any extra 
dissipated work that arises in the forward process,
while that computation runs. 

From now on we will refer to 
the physical variables that we suppose the engineer is directly interested in, and that are accessible to them, as
the \textbf{system of interest} (SOI). (So these are 
the variables in the physical device that the engineer
is directly interested in.)
All variables not in the physical device are considered to be inaccessible. These are partitioned into one or more \textbf{baths} (or \textbf{reservoirs}). (As illustrated below, the baths will implement the sequence
of inputs into and outputs from the computer.)

Sometimes we will need to distinguish between the abstract computer together with its abstract sequence of inputs, as considered in CS, and the physical system that implements 
that computer together with
its inputs. In such cases we refer to the former as the \textbf{logical computer}, and refer to the
latter as the \textbf{physical computer}. So for example,
the physical computer comprises the SOI and the set
of all the baths.

A key part of our framework is a coordinate transformation between states of 
the logical computer and those of the physical computer.
However, when care to distinguish those two types of computer
is not needed, we will sometimes
just use the term \textbf{computational system},
implicitly relying on context to determine whether we mean the logical computer or the physical computer.
We will also
sometime use the term \textbf{computational machine} to refer
to that part of the computational system that does not
involve the inputs, again relying on context to
determine whether we mean the SOI or 
the abstract, mathematical computer that the
SOI implements.

\subsection{The forward processes}
\label{sec:forward_process_intro}

To investigate the thermodynamics of the forward process, we need to specify the initial distribution over the set of all the variables in the computational system, both those in the SOI and those in the baths. Often this step is skipped in CS, since that distribution is not relevant to the questions being investigated. However, this step is crucial for us. Indeed, when combined with the dynamics of the full computational system, that initial distribution fixes the final distribution --- and it is the relation of those two distributions that determines the value of the EP.

Typically in computational models, if the initial distribution over the full computational system is in fact specified,
it is a product distribution over the SOI and the bath(s). 
Concretely, it is almost always the case in those analyses that the initial state of the computational machine is statistically independent of the initial state of the external world generating the stream of inputs. 
Accordingly, in our analysis we presume that the joint physical system is initially
in a product distribution.

Next, we need to specify the dynamics of the physical computer during the forward process, starting from such a product distribution. We suppose that the
dynamics during the forward process is logically reversible (and therefore deterministic).
There are three reasons for this:
\begin{itemize}
    \item{Define the ``physical computer EP'' (PEP) to be the EP that would be generated by a physical computer
    that completes a given computation. 
   There are infinitely many Markov processes that
   could be used for that physical computer
   and that are thermodynamically reversible, resulting in zero PEP. 
   In particular, 
a deterministic and reversible process is such a Markov process. 
These processes have a special property though: by the mismatch cost theorems \cite{JStatMech}, any forward process that generates zero PEP for one initial distribution over the joint system will also generate zero PEP if run  with a different initial distribution \textit{if that process is reversible and deterministic}. In 
	contrast, if the forward process is non-invertible, then it is possible to implement it with zero PEP only for one specific initial distribution. Moreover,
 the mismatch cost theorem also establishes that any other initial distribution for such a non-invertible dynamics will generate strictly positive PEP, no matter what the precise form of that (non-invertible)  dynamics. 

In other words, if the forward process is non-invertible, then in general there must be nonzero dissipated work if one does not initialize the
	   full system with the unique initial ``prior'' distribution of the (full system)
	   physical process. Moreover, the precise amount of that dissipated work will depend both on the actual distribution and on the prior distribution. Accordingly, one cannot calculate PEP without specifying that prior distribution for such a stochastic process. 
    
    The prior distribution is in turn
	   specified by the precise details of the physical computer, details that have nothing to do with they dynamics of the logical computer. The result is that for such non-invertible stochastic processes, we cannot calculate the PEP without specifying some of those precise details of the physical computer which are absent in the associated logical computer.
	   In contrast,  
    a physical computer that has invertible dynamics 
    can generate zero PEP \textit{no matter what the initial distribution and prior distributions are}. 
    
    This means we can ignore the issue of what the prior distribution is --- so long as the dynamics is invertible. So by using an invertible dynamics we can focus on the thermodynamics arising from just the logical
    computer, without concern for the parameters of the underlying physical process that implements that computer.
	   }

\item Many of the machines considered in CS theory are deterministic, and many are stochastic. We want our framework to be able to represent systems with either kind of dynamics in a straightforward way.
As we show below,
this can be done if we restrict attention to systems with deterministic dynamics. In particular, it is straightforward to implement an arbitrary stochastic evolution of the SOI using a fully deterministic and invertible joint system.
(This can be done by using the random initialization of the baths to introduce stochasticity into the dynamics
of the SOI as and when needed.)

\item Another advantage of our using deterministic, invertible dynamics of a system is that if the \textit{full}
system, including the baths, has only a finite number of degrees of freedom, then we are in precisely the setting of the inclusive Hamiltonian framework, mentioned above. 
This means in particular that our results should 
carry over with minor modifications to the quantum thermodynamics of open quantum
systems. (In contrast, there are nontrivial difficulties in
formulating the dynamics of the SOI in
open quantum systems in terms of CTMCs, which substantially restricts our ability to analyze such
systems
using the CTMC-based version of stochastic thermodynamics.)

Accordingly, to 
complement our analysis of the REP, below we also present some new results concerning the PEP as defined under the Hamiltonian framework of the forward process; we refer to this quantity as the HEP (Hamiltonian framework EP). 
\end{itemize}

First, we show below that the expected value of the HEP equals the expected value of the REP. 
Next, we derive an integral fluctuation theorem (IFT) and an exchange FT (XFT) for the HEP. As a final contribution to understanding of the HEP, we confirm that the mismatch cost formula holds for the HEP~\footnote{As
an aside, note that even though deterministic invertible dynamics
for the full computational system is typically used to motivate the \textit{formula} for the HEP, given that formula, the actual dynamics during the forward process has no effect on the expected value
of the HEP; that expected value
is fixed by the initial and final (pre-reinitialization) distributions of the joint system,
no matter how that final distribution is generated
from the initial distribution.}.

\subsection{Results and roadmap}
\label{sec:roadmap}

We refer
to this minimal model of the thermodynamics of a computational machine and its
external environment during a forward process
followed by a reinitialization process
as the \textbf{inclusive
thermodynamics of computational machines}.

At a high level, we have two sets of results
concerning inclusive thermodynamics:
\begin{enumerate}
\item 
Some of our results concern the 
the thermodynamics of the forward process, 
considered from the perspective of the Hamiltonian framework, 
adapted to computational machines. 
We emphasize that these results are actually
more general than our analysis of computational
machines, as they apply to any use of the
Hamiltonian framework. Specifically, we have derived the mismatch cost formula for the Hamiltonian framework, and also an IFT and an XFT within this framework.

\item  One might be concerned about the applicability of the Hamiltonian framework particularly
to systems like DFAs, since that framework would only apply if we could assume that
input strings to the DFA were generated by repeatedly sampling a Boltzmann distribution:
in the real world, input strings are generated by engineers. Accordingly, we consider a lower bound on the expected dissipated work of the 
reinitialization process. We show that the lower bound derived this way –which is the REP– also lower bounds the EP of the Hamiltonian framework.
Hence, all our results concerning a lower bound on the
expected dissipated work in the reinitialization process of an inclusive model of a
computational machine also apply to the HEP of
that machine. 
\end{enumerate}

More specifically, in this paper, 
we employ our framework to investigate the dissipation costs of three computational systems: 
\begin{enumerate}
    \item 
DFA, which is a foundational model of computation that underlies more general models such as the TM, \item Markov sources of information theory, 
\item Communication channels central to the theory of communication.
\end{enumerate}

Our paper is organized as follows: In \Cref{sec:mathematical_framework}, we provide the elementary concepts of the inclusive framework. In \cref{sec:preliminaries}, we formally define the three different computational systems we analyze in this paper. In \Cref{sec:inclusive_formulation_DFAs_and_information_sources}, we present the mathematical basis for our framework. In \cref{sec:unilateral_decomposition_dynamics}, we  develop the inclusive thermodynamics of DFAs and derive the lower bound on dissipated work as REP. In \cref{sec:thermodynamic_complexity.information_source}, we use a special class of DFAs to model Markov information sources. In \cref{sec:multilateral_decomposition_dynamics}, we extend \cref{sec:thermodynamic_complexity.information_source} to model communication channels. Subsequently in \cref{sec:rate_distortion_channel}, we formulate the rate-distortion problem in the theory of communication using the inclusive thermodynamic quantities.

In \cref{sec:integral_ft_machines}, \cref{sec:prior_costs_dfas}, and \cref{sec:xft_communication} we present our results concerning the thermodynamics of the forward processes of physical computers which implement DFAs. In \cref{sec:integral_ft_machines} we derive an 
IFT for the HEP. In contrast to conventional
IFTs though, here we find that the expectation of
(the exponential of negative of the) HEP is upper-bounded by $1$, rather than equal $1$ exactly, as it is the case
in conventional IFTs.
Intuitively, this is because typically the
logical computers considered in CS theory
have a single, unique initial state. So the
distribution over the states of the SOI at $t=0$
is a delta function. This means that any 
reverse trajectory that does not end in that
initial state does not contribute to the IFT calculation. In other words, the IFT
we derive equals $1$ minus the probability
of such an impossible reverse trajectory.
In \cref{sec:prior_costs_dfas}, we derive a mismatch cost result concerning the PEP of the Hamiltonian framework. We derive our mismatch cost result with respect to marginal distribution over the states of the SOI, so it differs from the mismatch cost
of the full system discussed in \cref{sec:forward_process_intro}. \footnote{As emphasized in \cref{sec:forward_process_intro}, full system mismatch cost is independent of the actual initial distribution since the full system dynamics is deterministic and invertible.} Next in \cref{sec:xft_communication}, we extend a previously derived XFT for the HEP to scenarios with multiple baths.

In the remaining sections, we analyze the connections between a CS measure of complexity over DFAs and the EP of executing DFAs. We prove in \cref{sec:thermo_complexity_theorem} that for equivalent DFAs with different size complexities, executing a minimal complexity DFA results in the minimal EP at all iterations.

We conclude with a discussion,
where we describe methods and features essential to both CS and inclusive thermodynamics. We describe a few research directions where our framework might prove fruitful. 

As we will later come back in \cref{sec:discussion}, we are interested in implementing our framework to analyze many computational systems, ranging from push-down automata to TMs. In this work, we mainly focus on introducing the framework and illustrating it through computational systems which can implement DFAs.

\section{General framework: Inclusive formulations of computational systems}

\label{sec:mathematical_framework}
We will consider physical systems which evolve in
discrete time. These systems have at least two
components: an SOI,
and a set of one or more external environments which interact with the SOI, and are referred to as baths. 
The discrete time physical forward processes will evolve the joint system of the SOI and the bath(s)
from an initial joint distribution, until some
ending condition is reached (i.e., until computational task is implemented fully). After that
the joint distribution is reinitialized, to start 
a next physical forward process, where another computational task is implemented. We use the term \textbf{computational cycle} to mean such an entire process, taking the physical system from one
initialized distribution to the next.

As an example, much of our analysis below
concerns DFAs, a special type of logical
computer. Loosely speaking, a DFA is a system with a 
finite state space $S$, having elements
$s$. Initially the DFA is initialized to a special ``start state''. After that it 
receives a sequence of exogeneously generated symbols
called a ``string".
Those symbols are all elements of a finite alphabet, $\Sigma$ (e.g., the binary alphabet $\Sigma
_{0,1}$).
As the DFA iteratively
receives those symbols it makes associated
transitions among its possible states. The state of the
DFA when a termination
condition is reached (e.g., when the string ends) defines the
computation that the DFA performs on that string
it received. After it performs
such a computation, the DFA is reinitialized in its start state.  (See \cref{sec:preliminaries} for the formal definition of a DFA.)

We write a generic symbol from $\Sigma$ as $y$.
In this paper, we suppose that $\Sigma$ includes a special blank symbol ‘‘$\varepsilon$’’. As usual, we denote the set of finite strings of symbols from the alphabet as $\Sigma^{*}$ \cite{BOOK1998Elements}. We write a string that a DFA receives as $\omega$, having
length $|\omega|$. In our simplest physical model of the DFA, we identify the state of the DFA as the
state of the SOI, and the string $\omega$
with the state of the bath.  In this version we consider below, the state of the bath does not change
during a computational cycle. This has two consequences.
First, it means that we identify the initial distribution of
the state of the bath with the distribution over strings
that will be received by the DFA. Moreover, it means
we must augment the state space of the SOI. In addition to specifying
the state of the DFA, the state of the SOI must specify
an integer-valued pointer, $z$, to keep track of which element of the string $\omega$ is the current one. 
In other words, $z$ gives the iteration time $t$ at which the symbol $\omega[z = t] = y $ is received by the DFA. 
For later convenience, we define $\omega[-z]$ 
to mean the string of symbols of $\omega$ that are \textit{not} read at iteration $t = z$, $\omega[t=1]\dots\omega[t=z-1]\omega[t=z+1]\dots\omega[t=|\omega|]$.

In this simple implementation of a DFA, the state space of the full computational system is the set of all triples $(s, z, \omega)$. 
For the reasons given above, we suppose that
the dynamics governing this full space is deterministic. We will let strings $\omega$ be sampled at the beginning of each computational cycle, so that $\omega$ is time independent, and it is always possible to reconstruct the past history of a DFA's states. As a result, any distribution over the full state evolves by permuting which state has which probability, but doesn't actually change the multiset of the probability values of all joint states. Since entropy is a unique
function of that multiset of probability values, this means that the entropy over the
full state space is constant in time.

In order to investigate the associated thermodynamics, we must explicitly decompose
the full state space into a Cartesian product of two spaces: the set of states
of the SOI and the set of states of the bath. The states of the SOI
contain the accessible degrees of freedom, whereas the states of the bath are those inaccessible degrees of freedom.

As pointed out in \cref{sec:physics_to_be_cs}, we assume that the state of the SOI is ``accessible" to the engineer
once the DFA has completed a run. This implies that the engineer is allowed to reinitialize the initial distribution of SOI states by implementing any desired external work protocol over the SOI, while the SOI
is coupled to an infinite external thermal reservoir at temperature $1 / k_B$. In addition, as is conventional in the ST literature
of information processing, we assume that the Hamiltonian of the SOI is uniform at both the beginning and end
of any run, with the
same value at those two times \cite{thermo_info}. More precisely, we say that there is a constant $\mathrm{A}$ such that both the Hamiltonian of the SOI at the beginning of
the run and at end of the run 
have the value $\mathrm{A}$, independent of the state of the SOI. 

Based on these two assumptions, we can exploit the generalized Landauer bound of modern ST:  
the minimal free energy needed for re-initializating an SOI is given by the change in the
entropy (of the distribution over states) 
between the SOI's ending distribution
and its re-initialized distribution
\footnote{This use of ST implicitly assumes that the re-initialization is done via a CTMC, whereas we are careful \textit{not} to assume that the DFA itself evolves in a Markov process.}.

For the reservoirs, we assume that their states are ``inaccessible" to the engineer once the DFA has completed a run. Suppose that at the end of the computational process, the distribution over states of the bath is re-initialized,
just like the distribution over states of the SOI. Since the states of the bath are inaccessible, 
the engineer will not be able to implement this
reinitialization in a thermodynamically optimal
manner.
We suppose that the maximal free energy that can
be extracted in this reinitialization of the bath distribution would
occur by the bath's
relaxing to the thermal equilibrium of a fixed Hamiltonian, while being coupled to 
to an infinite external thermal reservoir at temperature $1 / k_B$. This Hamiltonian
is chosen so that the associated Boltzmann distribution is
the desired initial distribution of bath states at the beginning of the subsequent run. As opposed to
the re-initialization of the SOI distribution, the engineer is not allowed to implement \textit{any}
external work protocol acting on the bath distribution as it is
reinitialized. 

\section{Preliminaries}
\label{sec:preliminaries}
In this section we provide the definitions of DFAs, Markov information sources, and communication channels, as they are used in our paper. 

\subsection{Basic concepts and terminology}
\subsubsection{Regular languages and finite automata}

A \textbf{deterministic finite automaton} (DFA) is a five-tuple $M=\left(S, \Sigma, f, q_{0}, K\right)$, where $S$ is a finite set of states, $\Sigma$ is a finite alphabet of observable symbols, $f: S \times \Sigma \rightarrow S$ is a transition function mapping a current input symbol and the current state to another state, $q_0$ is a start state, and $K$ is a set of accepting states.
An input string $\omega$, i.e. a sequence of symbols from $\Sigma$, is accepted by a DFA if the last state entered by the machine on that input string is in $K$. A \textbf{language} recognized by a DFA is the set of strings that it accepts, $L(M)=\{\omega \in \Sigma^{*}\mid {f}\left(q_{0}, w\right)$ $\in$ $\left.K\right\}$. Equivalently, the language of a DFA is the decision problem it solves \footnote{A DFA gives a Boolean answer on any input string by answering True if the state after reading the string is an accept state and by answering False otherwise.}. $L$ is a \textbf{regular language} if there is a DFA $M$ which recognizes $L$. 

Given a language $L \subseteq \Sigma^{*}$, a pair of strings $a, b  \in \Sigma^{*}$ are equivalent with respect to $L$, i.e. $a \sim_{L} b$, if for all $w \in \Sigma^{*}$ we have
$a w \in L$ if and only if $b w \in L$. $\sim_{L}$ is an equivalence relation over $L$. For each string $a$, its \textbf{equivalence class} $[a]$ is the set of strings equivalent to it, $[a]=\left\{b \in \Sigma^{*}\mid a \sim_{L} b\right\}$. A \textbf{minimal DFA} that recognizes a regular language $L$ is a DFA which has one state for each equivalence class $[a]$. For any regular language $L$, the Myhill-Nerode theorem (MN) says that the minimal DFA $M_{\text{min}}$ in the set $\Omega(L)$ of all possible DFAs that recognize $L$ is unique up to relabeling of the DFA states (See \cref{fig:fig2_21233} and \cref{fig:fig2_3}). Key concepts and the generic proofs of the MN can be found in \cite{hopcroft_book, BOOK1998Elements}.

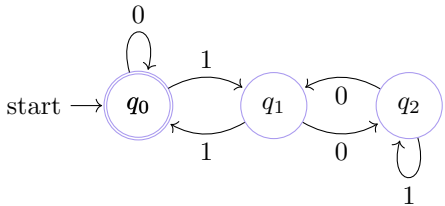
\begin{figure}[ht]
\centering
\begin{tikzpicture}[shorten >=1pt,node distance=1.8cm,on grid,auto,every state/.style = {draw = purple!80!red!20!blue!40!}]
\node[state, initial] (q0) {$q_0$};
\node[state, accepting] (q0) {$q_0$};
\node[state, right of=q0] (q1) {$q_1$};
\node[state, right of=q1] (q2) {$q_2$};
\path[->] 
    (q0) edge  [loop above] node {0} (q0)
          edge [bend left=30] node {1} (q1)
    (q1) edge  [bend left=30] node {1} (q0)
          edge  [bend right=30]  node [swap] {0} (q2)
    (q2) edge [bend right=30] node {0} (q1) 
          edge [loop below] node {1} (q2);
\end{tikzpicture}
\caption{The minimal DFA $M_\text{min}$ which recognizes the language $L$ of strings divisible by $3$.}
\label{fig:fig2_21233}
\end{figure}

\begin{figure}[ht]
\centering
\begin{tikzpicture}[shorten >=1pt,node distance=1.5cm,on grid,auto, every state/.style = {draw = purple!40!red!80!blue!20!}]
\node[state, initial] (q0) {$q_0$};
\node[state, accepting] (q0) {$q_0$};
\node[state, right of=q0] (q1) {$q_1$};
\node[state, right of=q1] (q2) {$q_2$};
\node[state, right of=q2] (q3) {$q_3$};
\path[->] 
    (q0) edge  [loop above] node {0} (q0)
          edge [bend left=30] node {1} (q1)
    (q1) edge  [bend left=30] node {1} (q0)
          edge  [bend right=30]  node [swap] {0} (q2)
    (q2) edge [bend right=30] node {0} (q1) 
          edge [bend right=30] node {1} (q3)
    (q3) edge [loop above]  node {1} (q3) 
          edge [bend right=60] node {0} (q1);
\end{tikzpicture}
\caption{A non-minimal DFA $M'$ which recognizes the language $L$ of strings divisible by $3$. $M'$ can be minimized by partitioning its states into groups of states that are mutually indistinguishable with respect to $\sim_{L}$. If we replace each such group with a single state, we get an equivalent DFA with minimum number of states, isomorphic to $M_\text{min}$ of \cref{fig:fig2_21233}.} \label{fig:fig2_3}
\end{figure}
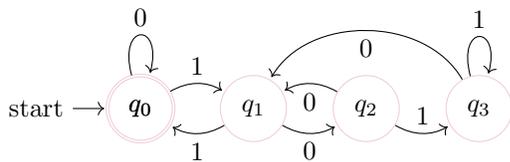

The implementations of regular languages (as decision problems solved by DFAs) concern CS problems of computational complexity. There are two central complexity issues of interest: time and space complexity, and descriptional complexity \cite{Brzozowski_Review_article}. The focus of this article is on the size complexity, which is a primary definition of descriptional complexity for regular languages and DFAs \cite{hopcroft_book, HOLZER2011456}. In accordance with the CS conventions, we define the size complexity of a given DFA as the number of states of that DFA.

Formally, DFAs can process both finite length and semi-infinite length strings (or even bi-infinite strings, as considered for example in the literature on sofic shifts \cite{Kitchens1997SymbolicDO}). Moreover, if the length of the string is finite, in general the precise length is allowed to vary.

In our main text though, for simplicity, we consider scenarios where the computational system runs for $\tau$ iterations before it is reinitialized. Accordingly the iteration $t$ will take values in the
set $\mathcal{T} := \{0, 1, 2, \ldots, \tau\}$.
To capture the possibility that the input string might have length $n < \tau$,
 we simply set all symbols $\omega[i]$ for $n < i \le \tau$ to the blank symbol, which does not occur anywhere earlier in $\omega$. We then require that no matter what state the DFA is in, if the input symbol is a blank, it stays where it is.
(See \cref{sec:app_1} and \cref{sec:app_2}  for discussion on how to extend our thermodynamic analysis, provided in \cref{sec:thermo_inclusive}, to semi-infinite and bi-infinite length strings, respectively.)

\subsubsection{Information sources and communication channels}

DFAs serve as the core information-processing system of many logical computers. In particular, Markov information sources are DFAs with the modification that rather than \textit{read in} random symbols and change states accordingly, they 
make stochastic state transitions and \textit{emit} symbols accordingly. If noise gets added
to the symbols emitted by such a Markov source, the result
is a noisy communication channel. 

More precisely, an \textbf{information source} is a stochastic process which generates a sequence of random variables $Y_{0}, Y_{1}, Y_{2}, \dots$, where each $Y_{i}$ takes values in a finite alphabet $\mathcal{Y}$. A \textbf{Markov source} is an information source whose underlying dynamics is given by a non-observable Markov chain, i.e., it is a hidden Markov model (HMM) \cite{Ephraim2002HiddenMP}. Note that in a Markov source, each random variable $Y_i$ is parameterized by the associated hidden state of the Markov model.

In the real world, after a Markov source generates a symbol, this symbol is fed into a communication channel to be delivered to a receiver. Formally, a discrete \textbf{communication channel} is a system consisting of an input alphabet $\mathcal{Y}$, output alphabet $\mathcal{R}$, and a transition matrix characterizing the probability masses $Q({r}\mid{y})$ of observing the output symbol $r \in \mathcal{R}$ given that the input symbol is $y \in \mathcal{Y}$. In this work, we consider \textbf{discrete memoryless channels}, where for any input string (sequence of symbols) ${\underline{y}}=\left(y_{1}, \ldots, y_{N}\right)$ and the output string ${\underline{r}}=\left(r_{1}, \ldots, r_{N}\right)$, the overall  probability transition matrix $Q({\underline{r}} \mid {\underline{y}})$ can be written as
$
Q(\underline{r} \mid {\underline{y}})=\prod_{i=1}^{N} Q\left(r_{i} \mid y_{i}\right)
$,
and each transition probability $Q\left(y_{i} \mid r_{i}\right)$ is independent of $i$.

\section{Inclusive formulation of DFAs}

\label{sec:inclusive_formulation_DFAs_and_information_sources}

\subsection{Decomposing the full state space of a DFA}
\label{sec:ground_inclusive_thermo}

Recall that the state space of a physical
computer implementing a particular logical computer is a Cartesian product of a set of accessible variables and a set
of inaccessible variables (or of multiple such sets of inaccessible variables, more generally).  This Cartesian product provides a coordinate system for the physical
computer, and the dynamics of probability distributions through those coordinates determines the thermodynamics of the physical computer. 

However, in
general this Cartesian product
coordinate system will not be
the coordinate system used directly
to specify the
dynamics of the logical computer
in its conventional, CS theory formulation.
As an example, suppose the physical computer implements a DFA. As illustrated below, we can design such a computer 
which contains the
SOI and a single bath. So this is a coordinate system
with two coordinates.
However, the update function of the DFA,
which determines the dynamics of the individual states of the full physical system, is directly provided in terms of
the set of triples $(s, z, \omega)$. That set of triples provides
a second coordinate system, 
which differs from the first one in general.

Accordingly, to analyze the thermodynamics of a given DFA and its update function, we need to specify a map
from the coordinate system of the logical computer
to the Cartesian product of accessible and inaccessible variables that comprise the physical computer.
We refer to such a map as a a \textbf{decomposition}
of the variables of the logical computer. So
for example, the map from the set of triples $(s, z, \omega)$
of a DFA to a Cartesian product of accessible and inaccessible variables is a \textbf{decomposition} of the set of such triples. 

In general, even for a fixed DFA, the
decomposition that is most appropriate will vary with different choices of the accessible and inaccessible variables, i.e., with different ways that an engineer could design the DFA.
Here we require that the state spaces of the
physical computer and the logical computer,
together with the decomposition
between the two, meet the following desiderata:
\begin{enumerate}
\item The full state space of the physical 
computer can be written as a Cartesian product $\mathrm{U} \bm{\times} \Pi_{i} \mathrm{V}^{i}$, where $\mathrm{U}$ is the state space of the SOI with states $u \in \mathrm{U}$, and $\mathrm{V}^{i}$ is the state space of the bath $i$ with states $v^{i} \in \mathrm{V}^{i}$.
\item The state of the SOI and those of the baths are statistically independent at the initial
time $t = 0$, i.e., $P_{0} = p_{0}  \bm{\times} \Pi_{i}\rho_{0}^{{i}}$, where $p_{0}$ is the initial distribution over the states of the SOI, and $\rho_{0}^{{i}}$ is that of the bath $i$.
\item For each bath, its initial distribution $\rho_{0}^{{i}}$ can be written as a Boltzmann distribution for an associated finite bath Hamiltonian, $B_i(v_{0}^{i} = v^{i})$, which does not change with time. So for all baths $i$,
we need the function
\eq{
B_i(v^i) := -\ln \rho_0^i(v^i)
}
to be finite-valued for all $v^i$
that can occur at \textit{any} iteration
(not just iteration $0$) with nonzero probability.
\item The decomposition map is injective for all
states of the logical computer that can occur with nonzero
probability.
\item The dynamics across the state space of the physical computer is 
deterministic and invertible for all of its states that 
occur with nonzero probability.
\end{enumerate}

Below we show that two particular decompositions 
satisfy these desiderata:
\begin{enumerate}[label=(\alph*)]
\item {\textit{Unilateral decomposition}: $\mathrm{U} =$ \{($s, z)$\} and $\mathrm{V} =$ \{$(\omega)$\}}, 
\item {\textit{Bilateral decomposition}: $\mathrm{U} =$ \{$(s, z, \omega[z])$\} and $\mathrm{V} =$ \{$( \omega[-z])$\}}
\end{enumerate}
The unilateral decomposition is appropriate for analyzing DFAs, in
the sense that the associated choice of accessible and inaccessible variables is
``reasonable" as a model of how an engineer would in fact be able to design a DFA. The presumption here is that the engineer can access the (physical variables specifying) the state of the DFA, $s$, and the pointer, $z$, but cannot do the same for the string $\omega$ that will be input to the DFA.

The bilateral decomposition is instead appropriate for analyzing Markov information sources and communication systems, again in the sense that a
real-world engineer will typically be able to build a device that directly manipulates the associated accessible variables, but not the associated inaccessible variables. In \cref{sec:multilateral_decomposition_dynamics}, we introduce a third decomposition, the \textit{multilateral decomposition}, which is an extension of the bilateral decomposition to scenarios with multiple baths. This last decomposition is
appropriate for analyzing the thermodynamics
of communication channels. 

In the next subsection we provide a fully formal definition of the dynamics of a DFA over a set of triples $(s, z, \omega)$. In
the following subsection, we describe how
to transform the dynamics from that
coordinate system of triples to the Cartesian product coordinate system $\mathrm{U} \bm{\times}_i \Pi_{i} \mathrm{V}^i$ (Until \cref{sec:multilateral_decomposition_dynamics}, we will concentrate on physical scenarios where there is only one bath coupled to the SOI, and
so for simplicity write $\mathrm{U}$ $\bm{\times}$ $\mathrm{V}$). We then use this description to satisfy our first desideratum. 
In the last subsection of this section, we discuss how to ensure that the remaining two desiderata are also satisfied.

We emphasize that while we consider transformations from one coordinate system to another for the case of DFAs, all of our analysis below would apply to many other computational systems which satisfy our desiderata. In general,
this requires paying special care to what variables in
the logical computer are exogeneous inputs, not defined
in the CS definition of the computer, and so need to be
identified with states of the bath(s) of the physical 
computer.

\subsection{The dynamics over the full DFA state space}
\label{sec:unilateral_decomposition_dynamics}

Here we will only consider maps from
the logical computer's state space to that
of the physical computer that obey the fourth
desideratum above. 
As a result,
the fifth desideratum holds iff
the dynamics across the state space of the logical
computer is 
deterministic and invertible for all of its states that 
occur with nonzero probability. 
Note though that considered as a function of just the DFA state $s$, the update function of a DFA need not be invertible in general, even if its dynamics is deterministic. 

This reflects the fact that in general, 
one cannot use a simple ``translation'' to go back and
forth between the variables in a physical computer 
and
those of the associated logical computer. Rather a
somewhat subtle coordinate transformation is required.
The next three subsections introduce a broad set of such
coordinate transformations, which will suffice for our
purposes in this paper.

To begin, note that in general 
there may be multiple pairs of a state of a DFA, $s$, and an input symbol, $y \in \omega$, which are all mapped by $f$ to the same next state of the DFA, $s'$. 
This would appear to violate our desideratum of invertible dynamics of the logical
computer. 
However, recall that the state space of the full system is the
space of all triples $(s, z, \omega)$. We need to ensure that the dynamics  over \textit{this} space 
is invertible, for all such triples that can occur with nonzero probability.
Fortunately, we can always do that, even if the update function applied to just the DFA's state is not invertible.

To do this, it is convenient to distinguish two kinds of triples, which we refer to as ``legal'' and ``illegal".
Formally, a triple $(s, z, \omega)$ is
\textbf{legal} if either of the following conditions is satisfied:
\begin{enumerate}
\item $z = 0, s = q_0$;
\item $(s, z, \omega) = (f(s', \omega[z]), z, \omega)$ for some legal triple, $(s', z - 1, \omega)$
\end{enumerate}
If a triple is not legal, it is called ``illegal".

We write $\Phi$ for the one-step iteration function for the dynamics of the full state space defined over
all triples $(s, z, \omega)$. $\Phi$ must be deterministic \textit{and} invertible.
Accordingly, for any $0 \leq z < \tau$, 
and any legal triple $(s, z, \omega$), we set $\Phi$ to 
\begin{equation}
\Phi:(s, z, \omega) \rightarrow(f(s, \omega[z]), z+1, \omega)
\label{eq:eqn_2}
\end{equation}
Hence $\Phi$ implements the DFA's update function, as desired, when it is run on a legal triple.

As the start state $q_0$ of the DFA is specified, given any current legal triple $(s, z, \omega)$, we know that the initial state was in fact $(q_0, 0, \omega)$. Therefore we can  recover the state for any value of the pointer uniquely, by evolving $(q_0, 0, \omega)$ forward for that number of iterations. In particular, we can recover the predecessor state of $(s, z, \omega)$ uniquely by evolving $(q_0, 0, \omega)$ forward
$z-1$ times. Thus, $\Phi^{-1}$
is a well-defined function for all
legal triples $(s, z, \omega)$, where $0 \leq z < \tau$.

Next, for any legal triple $(s, z = \tau, \omega)$, set
\begin{equation}
	\Phi(s,\tau,\omega) := (q_0, z, \omega)
\label{eq:eqn_3}
\end{equation}
This means that $(s, \tau, \omega)$ is a predecessor state of $(q_0, 0, \omega)$.
Note that since $\tau$ is fixed, we can evolve $(q_0, 0, \omega)$ forward $\tau$ iterations to uniquely obtain state $s$ in \cref{eq:eqn_2}. However, no legal states that are \textit{not} of the form $(s, \tau, \omega)$
get mapped by $\Phi$ to $(q_0, 0, \omega)$. Combining establishes that
this map from $(s, \tau, \omega)$ to $(q_0, 0, \omega)$
is also invertible. 

In addition, we stipulate that while
$\Phi$ can be stochastic, there is
probability zero of it mapping an illegal state into a legal state. (As an example,
in much of the analysis
below, for simplicity we take $\Phi$ to be
the identity
function when applied to any illegal state.)
Hence, to ensure that the update function of the DFA gets implemented in a deterministic invertible manner, we need to confirm that there is zero probability that $\Phi$ ever gets run
on an illegal triple. 
So long as we require that the initial distribution has support restricted to legal triples. So under
this requirement, the dynamics of the logical
computer is deterministic and invertible for all
states with nonzero probability, as required.

Note that this $\Phi$ is time-independent. However, as described in \cite{JarzynskiHamiltonian}, most of the calculations in the next section can be naturally extended to allow the dynamics to change with time. Furthermore,
for our preliminary analyses of the thermodynamics of DFAs we only need one bath, to specify the randomly chosen input string. 
However, for other analyses below, e.g. those involving communication channels,
we need to have more than one bath, 
since there are multiple statistically independent random processes affecting the SOI's evolution.

\subsection{Decompositions of DFAs}
\label{sec:form_coord}
We write the decomposition
from the space of triples $(s, z, \omega)$
to $\mathrm{U}  \bm{\times} \mathrm{V}$ as a bijective,
vector-valued function $g = (g^{U}, g^{V})$ with
domain $\mathcal{T} \times S \times  \Sigma^{\tau + 1}$.
The first component of this function gives the mapping to the state of the SOI,
and the second component gives the mapping to the state of the bath. We write this as
$g^U: \mathcal{T} \times S \times  \Sigma^{\tau + 1} \rightarrow \mathrm{U}$, and $g^V: \mathcal{T} \times S \times  \Sigma^{\tau + 1} \rightarrow \mathrm{V}$, respectively (with an obvious extension to
vector-valued functions $g$ whose image has more components when there are multiple baths). 

So we can
write the dynamics over the Cartesian product 
coordinate system of the physical computer as
\begin{equation}
\Phi^{comp}(u, v) = g(\Phi[g^{-1}(u, v)])
\label{eq:eqn_6}
\end{equation}
We will often abbreviate this as
$\Phi = g\Phi g^{-1}$, 
where the $\Phi$ on the RHS
is a function of $(s, z, \omega)$ while
the $\Phi$ on the LHS is a function of $(u, v)$.
We can use
\cref{eq:eqn_6} to specify how any distribution over $\mathrm{U} \bm{\times} \mathrm{V}$ evolves:
\begin{equation}
P_{t+1}(u, v) = P_t(\Phi^{-1}(u, v))
\label{eq:eqn_7}
\end{equation}

As an example, under the bilateral decomposition,
\begin{equation}
\begin{split}
g^{U}_{bi}(s, z, \omega) &:= (s, z, \omega[z]) \\
g^{V}_{bi}(s, z, \omega) &:= (\omega[-z])
\end{split}
\label{eq:eqn_4}
\end{equation}
with the bijection
\begin{equation}
g_{bi}(s, z, \omega) := (g^{U}_{bi}(s, z, \omega), g^{V}_{bi}(s, z, \omega))
\label{eq:eqn_5}
\end{equation}

Since the function $g$ is bijective, the fourth desideratum
is met. Together with the fact that the logical
computer's dynamics is deterministic and invertible,
as established in the previous subsection, this means
that the fifth desideratum is also met.

\subsection{Ensuring the second and third
desiderata are met}

The second desideratum is automatically
satisfied so long as we choose an appropriate initial distribution over the
set of triples $(s, z, \omega)$. For example, in the unilateral decomposition,
the initial distribution over the SOI states is a delta function over the $(s,z)$ pairs, equalling $1$ for $(s = q_0, z= 1)$. The initial bath state instead
specifies the input string to the DFA, and is given by sampling an appropriate distribution over such strings. In general, we impose no restrictions on that distribution, except that it be well-defined. In particular, we do not require that its support be restricted to strings that are in the language accepted by the DFA. 

Note that in the unilateral decomposition, due to the dynamics over triples $(s, z, \omega)$, the state of the bath never changes from its initial one. Accordingly, even if 
there are some values $v$ such that $\rho_0(v) = 0$, and so $B(v) = \infty$, 
those values can never occur at any iteration. So the third desideratum is automatically obeyed in the unilateral decomposition.

The time evolution of bath states in the bilateral decomposition is shown in \cref{fig:fig_h_4}.
Just as with the unilateral decomposition, the second desideratum is automatically
satisfied for the bilateral decomposition
so long as we choose an appropriate initial distribution over the
set of triples $(s, z, \omega)$.

There are some extra subtleties with establishing the third desideratum for the bilateral decomposition. Note
that the full string $\omega$ at $t = 0$
is generated as two separate substrings. The first
symbol, $\omega[0]$, is one component of the SOI, and so is generated by sampling the SOI. The string of the remaining symbols, $\omega[-0]$, is the full state of the bath
at $t=0$. So the full string $\omega$ is generated
at $t = 0$ by sampling $p_0(\omega[0])
\rho_0(\omega[-0])$. The associated
value of the bath Hamiltonian is
$B(\omega[-0]) = -\ln \rho_0(\omega[-0])$.

As the state of the bath at some later iteration
$z = t$ will be $\omega[-z]$, we must ensure that
its value under the function $B$ is nonzero
for any string $\omega$ that can occur with nonzero probability. In other words, it must
be the case that for all $\omega \in
\rm{supp} \left[p_0(\omega[0])
\rho_0(\omega[-0])\right]$, for all $t$,
$B(\omega[-z])$ is finite. This condition requires in turn that $p_0(\omega[0])\rho_0(\omega[-z])$ is 
also nonzero for that string $\omega$. In general, whether this condition is met will depend on the details of the update function of the DFA, as well as the precise distributions $p_0$ and $\rho_0$. However, 
this condition is always met if both 
$p_0$ and $\rho_0$ have full support, since
that ensures that every $\omega$ has nonzero probability of being generated at $t=0$. In such a case, the third desideratum is automatically satisfied for the bilateral decomposition.

\begin{figure}[h!]
\centering
        \includegraphics[width=0.5\linewidth]{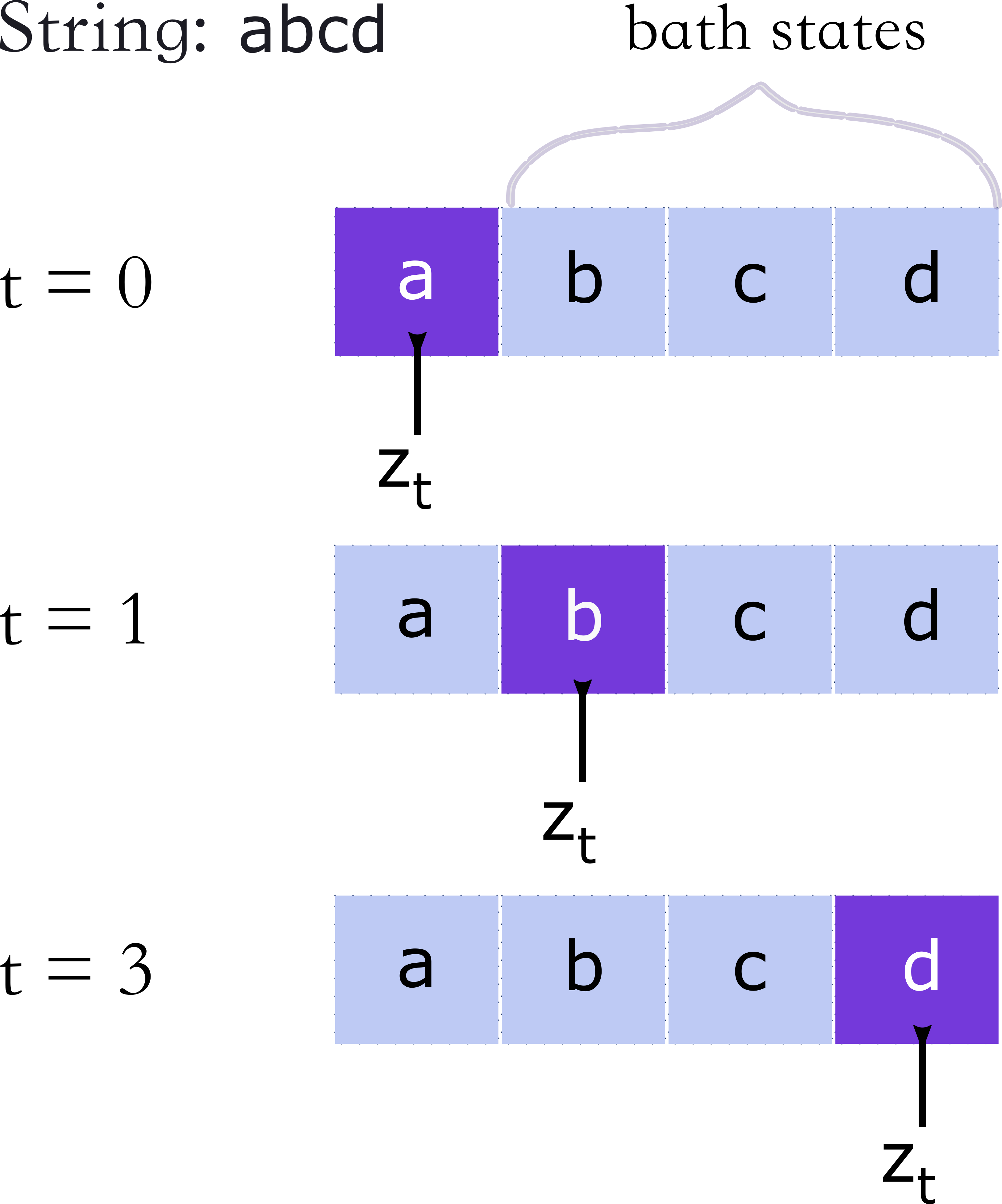}
   \caption{Recall that states of the baths are encoded as binary strings. In bilateral decomposition, in going from $t = 0$ to $t = 1$, the bath state changes from $\mathrm{bcd}$ to $\mathrm{acd}$.}
   \label{fig:fig_h_4}
\end{figure}

\section{Inclusive thermodynamics of DFAs}
\label{sec:thermo_inclusive}

In this section, we build on the inclusive formulation of the dynamics of physical computers which implement DFAs 
discussed just above, to introduce the inclusive thermodynamics of logical computers. 

First,
in \cref{sec:derive_ep}, we derive the expression for the REP, i.e., we derive a lower bound on the expected dissipated work during the reinitialization.  
Next, in \cref{sec:coding_efficiency} we show how to formulate Markov information sources (also known as HMMs or probabilistic DFAs), and in \cref{sec:thermodynamic_interpret_markov} we provide exact expressions for the expected EP incurred in the computational cycle of a Markov source. Building on these analyses in
\cref{sec:thermodynamic_complexity.information_source}, 
in the following subsections we present our IFT and mismatch cost result for the forward dynamics of logical computers which are modeled by the inclusive formulation. 
After that, in \cref{dfa_coord_baths} and \cref{sec:rate_distortion_channel} we extend the formulation of Markov sources to include multiple baths, and show how to interpret
such systems as models of communication channels. Next 
we consider systems with
multiple baths and derive a modified 
XFT which applies to not only our inclusive formulation of communication channels, but also to finite systems coupled to multiple finite baths as in Hamiltonian formulation.

Finally
in \cref{sec:thermo_complexity_lattice}, we review some classical results from set theory, relevant to DFAs. Then in \cref{sec:thermo_complexity_theorem}, we combine those results with our expression for EP in
\cref{sec:derive_ep}, 
to prove that for any regular language $L$, the DFA in $\Omega(L)$
which has minimal EP is the one with minimal size complexity. Our proof holds for all decompositions that satisfy the desiderata in \cref{sec:ground_inclusive_thermo}, for all dynamics, and for all iterations.

\subsection{Derivation of the expression for REP}
\label{sec:derive_ep}

In general, reinitizaliation of a physical computer involves a set of physically decoupled processes. The first of these processes reinitializes the accessible degrees of freedom of the SOI. The other processes reinitialize the separate baths. (Since we are for now focusing on the case of one bath, we will have a total of two processes.)

We do not make any assumptions for how the engineer reinitializes the accessible degrees of freedom. Because they can control these degrees of freedom, it's even conceivable that they could use the optimal thermodynamic protocol, involving a conventional, infinite external heat reservoir with $k_{B}T = 1$ to perform this reinitialization. Therefore we can only lower-bound the amount of work they need to expend in that reinitialization, by the amount that would be needed under that optimal thermodynamic protocol. Since we assume a uniform Hamiltonian over the states of the SOI, this minimal amount of work is given by the generalized Landauer bound \cite{ Esposito2010ThreeDF, DavidWolpertSTComputation}, as the change in the expected Shannon entropy of the SOI variables between the beginning and end of the reinitialization of the SOI:
\begin{equation}
\Delta H = [H(p_t(u)) - H(p_0(u))] 
\end{equation}

We know that the engineer needs to expend work in reinitializing the SOI, since the ending distribution of that reinitialization is a delta function. However, in general they might be able to \textit{recover} some work when they reinitialize the inaccessible degrees of freedom in the bath.
As the engineer has no direct control over how those inaccessible degrees of freedom are reinitialized, we 
adopt the ansatz that the amount of work which they can recover is \textit{upper}-bounded, by the
amount of heat that would flow
into a conventional, infinite
thermal reservoir with $k_{B}T = 1$, if the inaccessible degrees of freedom freely relaxed to the Boltzmann distribution of the Hamiltonian $B$ while coupled to that reservoir. Since energy is
conserved under free relaxation, this heat flow is just the change in the expected energy of the inaccessible degrees of freedom:
\begin{equation}
\bar{Q} = [\mathbb{E}_{\rho_0}(B(v)) - \mathbb{E}_{\rho_t}(B(v))]
\end{equation}

The dissipated work incurred in the reinitialization of the physical computer is given by the difference between the
amount of work spent on the SOI during that reinitialization, and the amount of work that can be recovered from the bath(s) during that reinitialization. A lower bound on this dissipated work is given by our lower bound on the amount of work that must be spent on the SOI, and the upper bound on the amount of work that can be extracted from the bath. Hence, the dissipated work is lower-bounded by
\begin{equation}
\overline{\sigma} = \Delta H - \bar{Q}
\label{eq:eqn_uation}
\end{equation}
As mentioned, we call $\overline{\sigma}$ the reinitialization EP (REP).
$\bar{Q}$ in \cref{eq:eqn_11}
is called the expected \textbf{entropy flow} (EF)
in the Hamiltonian framework. We will adopt the same terminology here. 

As an example, in the unilateral decomposition
of a DFA, since the state of the bath does not change
during the process, $\overline{Q} = 0$. Therefore
$\overline{\sigma} = \Delta H$. (In other words, the
EP in this case equals the generalized Landauer cost
of the SOI, i.e., the minimal \textit{EF} 
that could occur under the
CTMC-based version of ST.) So in particular, 
if the initial distribution of the SOI is
a delta function --- as is the case for example in a 
DFA --- then the expression for the EP simplifies further, to
$\overline{\sigma} = H(\rho_t)$.

\cref{eq:eqn_uation} was motivated
in \cite{massimiliano_lindenberg_vandenbroeck} through different physical considerations, identifying $\overline{\sigma}$ as the expected
entropy production of the forward process,
without any consideration of reinitialization.
(We refer the reader to \cref{sec:og_ep} 
for a short review of that argument in 
\cite{massimiliano_lindenberg_vandenbroeck}.)
The analysis in that paper
can be adapted to show that in the general case where there are $N$ baths, which are reinitialized separately of one another, in 
independent processes, \cref{eq:eqn_uation}
can be also written as
\begin{equation}
\overline{\sigma} = I(p_{t};\rho_{t}^{1}; \dots; \rho_{t}^{N}) + \sum_{i = 1}^{N} D\left[\rho_{t}^{i} \| \rho_{0}{^i}\right] \label{eq:eqn_11}
\end{equation} 
where the first term on RHS is the multi-information
between the SOI and all the baths at time $t$, and the second term is the sum of KL divergences between the initial time $t = 0$ distribution and the ending time distribution for each bath \cite{5392532}.
\subsection{Inclusive thermodynamics of Markov sources}
\label{sec:thermodynamic_complexity.information_source} 

In this section, we 
present some preliminary results concerning the thermodynamics of Markov information sources. We start in \cref{sec:coding_efficiency}
by 
showing how to formulate Markov sources in terms of DFAs. 
Subsequently in \cref{sec:thermodynamic_interpret_markov} we discuss the thermodynamic implications of modeling Markov sources as in \cref{sec:coding_efficiency}.
\subsubsection{Coding efficiency and Markov sources}

\label{sec:coding_efficiency}
 
Recall that in a standard DFA, at each
iteration $t$ the DFA is currently in some 
associated state $s_t$, and then receives
some symbol $\omega[t]$ to determine the next state
$s_{t+1}$, as specified by its update function $f$.

In the
standard interpretation of DFAs, we view each
such transition-specific symbol $\omega[t]$ as being \textit{received} by the DFA, and \textit{causing}
the DFA to implement that transition (due to the
update function of the DFA). We can
just as well view 
the exact same process as one in which the
DFA is currently in state $s_t$, and then \textit{generates} a symbol $y_{t} = \omega[t]$, as
it makes the (stochastic) transition from $s_t$
to $s_{t+1}$. 

Now restrict attention to the case where there is a conditional distribution $\pi(y | s)$ defined at every state of the DFA, and the  the 
strings $\omega$ of the DFA are generated by first sampling a distribution 
$\mathrm{P}(|\omega|)$ to generate a length of a string, and then running 
the process given in
Algorithm a total of $|\omega|$ times, to generate an associated sequence of symbols from $\Sigma$. This gives us a distribution $\mathrm{P}(\omega)$.

Next, recall from \cref{sec:preliminaries} that a Markov source is an HMM, which at each iteration $t$ generates a
symbol $y \in {\mathcal{Y}}$ by sampling 
a distribution that depends on the current
state of the hidden variable.
 Each successive symbol $y$ produced in this stochastic process specifies a state transition map over the set of hidden states, $\left(m\rightarrow m^{\prime} \right)$. Furthermore, any transition between
hidden states allowed by the HMM's adjacency
matrix, $\left(m \rightarrow m^{\prime} \right)$, 
 is specified by a unique symbol $y \in \mathcal{Y}$.

This establishes that we can view
the pseudo-code showing how to generate strings with Markov sources as an HMM, 
\begin{algorithm}[h!]
  \For{$i = 1;\ |\omega|;\ i++$}{
    Sample $\pi(y[i]\mid s[i-1])$ to get $y[i]$\;
    Set $s[i] = f(s[i-1], i-1, y[i])$ }
\end{algorithm}\\
where the states $s$
of the DFA are reinterpreted as hidden
states $m$ of an HMM, and we identify the set of symbols that the HMM can generate, $\cal{Y}$, as the alphabet
of the associated DFA $\Sigma$. 
Note that an important special case is where $\pi(y\mid s)$ is independent of $s$ for all iterations, so that the distribution over strings is given by IID sampling a fixed distribution. More precisely, when  $\omega=\left(\omega_{t}\right)_{t \geq 1}$ is a sequence of $\Sigma$-valued IID random variables, the successively visited states over the DFA obtained by processing $\omega$ gives a first-order homogeneous Markov chain \cite{Lladser2008MultiplePM}. In general though, $\pi(y|s)$ is dependent on $s$ and the Markov source induces an HMM over the DFA states.

Now suppose that the symbol $y_t$ generated at iteration $t$ of the HMM is first encoded according to a codebook, and then sent through a lossless channel to a receiver. Also assume that the receiver knows the mapping rules associating each hidden state transition with a unique symbol.  The receiver decodes the output of the channel to reconstruct $y_t$. If the receiver knows the previous hidden state $s_{t-1}$ at iteration $t$ 
(i.e., they know the previous
state of the DFA), then once they reconstruct $y_t$, they know the
current state, $s_t$, exactly. By induction, this means that
(assuming they know the initial state
of the DFA) they will always know what state $s_{t-1}$ was upon reconstructing $y_t$. 

Without loss of generality, assume that the codebook is a separate prefix-free code for
each state $s$ of the DFA.
Hence the codeword lengths for the code used in the DFA state $s$ must satisfy the Kraft inequality $\sum_{y} 2^{-l(y, s)} \leq 1$, where $l(y,s)$ gives the length of the associated prefix-free code for symbols $y$,
parameterized by the state $s$. The minimum expected codeword length $\bar{L}_{\min }(s)$ for each such $s$-parameterized code generated by the information source 
at each iteration $t$ satisfies \cite{Gray1990EntropyAI}:
\begin{equation}
H(Y_t \mid S_{t-1}) \leq \bar{L}^t_{\min }(S)<H(Y_t \mid S_{t-1})+1
\end{equation}
where
\begin{equation}
H(Y_{t} \mid S_{t-1})=-\sum_{s_{t-1} \in \mathcal{S}} \sum_{y_t \in \mathcal{Y}} P(s_{t-1}) \pi(y_t \mid s_{t-1}) \log \pi(y_t \mid s_{t-1})
\label{eqn:eqn_38}
\end{equation}
and $\pi(y | s)$ is the conditional distribution of symbols $y$
generated by the HMM at state $s$. 
\subsubsection{Thermodynamic interpretation of Markov sources}
\label{sec:thermodynamic_interpret_markov}
As we emphasized in \cref{sec:coding_efficiency}, for the bilateral decomposition of Markov sources there are two cases of interest: the simpler case considers $\pi(y | s)$ being independent of $s$ at all iterations, while the more general case considers $\pi(y | s)$ being dependent on $s$. Here, for both of those cases, we provide the exact expressions for the expected EP incurred in a computational cycle of a Markov source. 

First, we express the EP incurred in the time interval $[0, \tau]$ as
\begin{equation}
\begin{split}
\overline{\sigma} &= \Delta H(S, Y, Z) - \bar{Q} \\ &=  \Delta H(S, Y) + \Delta H(Z | S, Y) -\bar{Q} 
\end{split}
\label{eq:eqn__st21}
\end{equation}
Because $z$ is a deterministic variable, $\Delta H(Z | S, Y) = H_\tau(Z | S, Y)-  H_0(Z | S, Y)]$ is zero. In addition, since $s$ is fixed at $t = 0$, we can rewrite the above equation as
\begin{equation}
\begin{split}
\overline{\sigma} &= H_{\tau}(S, Y) - H_{0}(Y) - \bar{Q} \\
&=  H_{\tau}(S | Y) + H_{\tau}(Y) - H_0(Y) - \bar{Q}
\end{split}
\label{eq:eqn__st22.1}
\end{equation}
Similarly, we can write for the EP
\begin{equation}
\overline{\sigma} = H_{\tau}(Y | S)  + H_{\tau}(S)  - H_0(Y)- \bar{Q}
\label{eq:eqn__st22.2}
\end{equation} 
These are our first two results concerning the thermodynamics of DFA-based Markov sources.

Note that for any given DFA modeled under the unilateral decomposition, the three quantities $H_{\tau}(Y), H_{0}(Y),$ and $\bar{Q}$ only depend on the distribution over strings. In particular, they are independent of the number of states of the
DFA, or its update function. However, given that we define Markov sources in terms of $\pi(y | s)$, changes to the update 
function for a fixed $\pi(y | s)$ results in changes to all three of those quantities, in general.

Now, consider the first case we mentioned above, where $\pi(y | s)$ is chosen to be independent of $s$. Then those three quantities are independent
of all details of the DFA. So for this situation, we can easily compare any two DFAs, $A$ and $B$, based on their EP values incurred in a computational cycle using
the same $\pi(y)$. Using \cref{eq:eqn__st22.1}, the difference in their EPs is given by
\begin{equation}
H^{A}_{\tau}(S | Y) - H^{B}_{\tau}(S | Y)   
\end{equation}
This has a simple information-theoretic description: it is the difference in how
much information the ending symbol provides about the associated state of the hidden variable for the two associated HMMs.

In the second case considered above, regardless of $\pi$, using the equation for EP from \cref{eq:eqn__st22.2} we can write
\begin{equation}
\overline{\sigma} = \bar{L}^{\tau}_{\min }(S) + H_{\tau}(S) - \bar{Q}
\end{equation}
In particular, given any two HMMs generated with the same $\pi(y | s)$,
$A$ and $B$,
the difference in their EPs is
\begin{equation}
\bar{L}^{A}_{\min }(S) - \bar{L}^{B}_{\min }(S)
+ H_{\tau}^A(S) -  H_{\tau}^B(S)
\end{equation}
(where $\tau$ is  implicit.)
This gives a succinct relation between the thermodynamic cost of executing Markov sources with respect to a codebook:
In going from one HMM to another, the
EP changes by the sum of the associated change in minimal expected codeword length plus the associated change in the entropy of the ending distribution over hidden states.

\subsection{The integral fluctuation theorem for forward dynamics
of computational machines}
\label{sec:integral_ft_machines}
Recall that the state of the full computational system is written 
as $x \in X$, and we consider its values at integer-valued times,
$x_0, x_1, \ldots, x_\tau$. We write that entire trajectory
of $\tau + 1$ successive values of $x$ as $\vx$.
Formally, this is what we refer to as a \textbf{forward} trajectory,
to distinguish it from the dynamics when the computational
system gets reinitialized.
We will decompose that
trajectory into a trajectory of the SOI, $\vu$, together
with the trajectory of the bath, $\vv$. (When there are
multiple baths, we have multiple such bath trajectories, indicated as $\vv_i$.)

Since the Hamiltonian framework is a topic of
broad interest in the literature, in this section we derive an IFT constraining the
distribution of the values $\sigma(\vx)$, the EP of the forward process
that is the central concern
in the inclusive Hamiltonian framework.
While our approach applies more generally, we focus on
forward processes that implement the dynamics of
DFA.

Recall that the the main focus of the Hamiltonian
framework is the thermodynamics of
forward trajectories (where one also
imposes some assumptions concerning
the Hamiltonian of the full system that are not
necessary here).
Accordingly,
here we will use the terminology adopted in both the classical physics version of the Hamiltonian framework~\cite{JarzynskiHamiltonian,Strasberg2017StochasticTI,PhysRevResearch.2.033524}
as well as the quantum-mechanical version of the Hamiltonian framework
(sometimes referred to as the thermodynamics of ``open systems"~\cite{Funo2018QuantumFT,PhysRevX.8.031037}). 

To begin,
we define the \textbf{trajectory-level} EP as
\begin{equation}
\sigma(\vx) := \left[\ln p_0(u_0) - \ln p_\tau(u_\tau)\right]
    -  \left[B(v_0) - B(v_{\tau})\right]
\label{eq:eqn_1771}
\end{equation}
The expectation of $\sigma(\vx)$ over all trajectories
$\vx$ is just $\overline{\sigma}$, 
the expected EP during the reinitialization
process. Similarly, the expectation of the entropy flow $ Q = \left[B(v_0) - B(v_{\tau})\right]$ over all trajectories is the expected EF $\bar{Q}$ occurring in the reinitialization process (see \cref{eq:eqn_uation}). 

In addition though, assuming
an appropriate joint Hamiltonian over the state of the SOI
and the bath, the quantity on the RHS of \cref{eq:eqn_1771}
can be motivated without
any presumption of a reinitialization process, as the expected entropy
production that arises in the 
\textit{forward} process. 
(Indeed, showing that this quantity can
be identified with the expected EP 
in the forward process is one of the major
results of the Hamiltonian
framework~\cite{massimiliano_lindenberg_vandenbroeck, PhysRevLett.122.150603}.)

Note also that
the distribution of values $\sigma$ is the
distribution of values of forward process EP that would arise
in repeated sampling of the forward process. 
For the reasons just given, we know that those
two distributions over EP result in the same value
of expected EP. In general though, the 
distribution of values of EP generated during
the forward process differs
from the distribution of values of EP that would arise by reinitialization.

\subsubsection{Time-reversed processes}

\label{time_rev_IFT}

Here we introduce a few more concepts useful for deriving the IFT for the EP incurred during the forward computational process for a full system consisting of an SOI and a single bath. These concepts and our derivation can be easily generalized to include multiple baths.

Recall that for any decomposition of computational systems, where
$\mathrm{U}$ is the SOI and $\mathrm{V}$ is the bath, we stipulate that the initial distribution
over $\mathrm{U} \bm{\times} \mathrm{V}$ is a product distribution, and write 
$
{P}_{0}(s, z, \omega) = {p}_{0} {\rho}_{0}
$. 
Also recall that we require the support of ${P}_{0}$
to be restricted to legal triples in any decomposition.  Note that for many decompositions, the state of the bath $\mathrm{V}$ changes from one iteration $t$ to the next. Hence, as the SOI and the bath dynamically evolve, we must ensure that there is zero probability of $B(v_t)$ ever being infinite for accessible regions of the state space of the bath. (Recall the third desideratum in \cref{sec:ground_inclusive_thermo}.)

We define the reverse process in two steps. First,
the distribution over states \textit{of the bath} at the beginning of the reverse process is given by $\rho_0$, the same distribution that is set
at beginning of the forward process. After this distribution is 
sampled, the reverse process is
generated by running the forward process backwards in time. Using the formulation introduced in \cref{sec:inclusive_formulation_DFAs_and_information_sources}, this means that reverse processes over DFAs are obtained by iterating $\Phi^{-1}$
starting from the ending state of corresponding forward processes. 

For any forward trajectory $\vx$, we write the reverse of that trajectory as $\tvx$. The initial distribution of the reverse process and the dynamics of
the reverse process provides a distribution over 
the entire set of reverse trajectories, which we write
as $R(\tvx)$. We indicate marginalizations of that distribution in the usual way.
In particular, given a forward trajectory
$\vx$ going from $(q_0, z_0, \omega_0)$
to $(s_\tau, z_\tau, \omega_\tau)$ 
the probability $R(.)$ of the initial and final points in the associated
reversed trajectory $\tvx$ evolving from $\tvx_\tau$ to $\tvx_0$ is 
\eq{
R(\tvx_\tau, \tvx_0) = R( (s_\tau, z_\tau, \omega_\tau), (q_0, z_0, \omega_0))
\label{eq:5}
}
Using this notation, the conditional distribution
\eq{
	R(\tvx_0 = ( q_0, z_0, \omega_0) \,|\, \tvx_\tau = ( s_\tau, \tau, \omega_\tau))
	\label{eq:eqn_5.2}
}
is well-defined for all triples $( s_\tau, \tau, \omega_\tau)$. 
In addition, it equals $0$ for all illegal triples $( s_\tau, \tau, \omega_\tau)$ 
unless $(q_0, z_0,  \omega_0) = (s_\tau, \tau, \omega_\tau)$. 

To define the conditional distribution in \cref{eq:eqn_5.2} more formally, we
introduce $F(s, z, \omega)$ (resp. its inverse, $F^{-1}$) as the joint state of the full system, obtained by applying
$\Phi$ (resp. $\Phi^{-1}$) a total of $\tau$ times,
starting with $(s, z, \omega)$. 

Then the conditional distribution of the entire reverse trajectory,
given the entire forward trajectory, can be written as
\begin{equation}
R(\tvx_0 \,|\, \tvx_\tau) = 
    \delta(F^{-1}(\tvx_\tau), \tvx_0)
\end{equation}

Combining the above with the specification of the initial distribution over the full system in the reverse process, where we write $
R_{0}=p_{0} \rho_{0}$, we complete the definition of the reverse dynamics.
A subtle point is that mostly, for the initial distribution
of the reverse process to have this form of $R_0$, there must be nonzero probability of 
an initial state of the reverse process (an initial state for which $z = \tau$)
that is illegal. Those triples cannot possibly arise in
the forward process. As we run that reverse process, due to
our definition of $\Phi$, any such illegal point can only be mapped to other illegal points, as illustrated in \cref{fig:fig_trajectories}.

\begin{figure}[h!]
  \centering
   \includegraphics[width=0.8\linewidth]{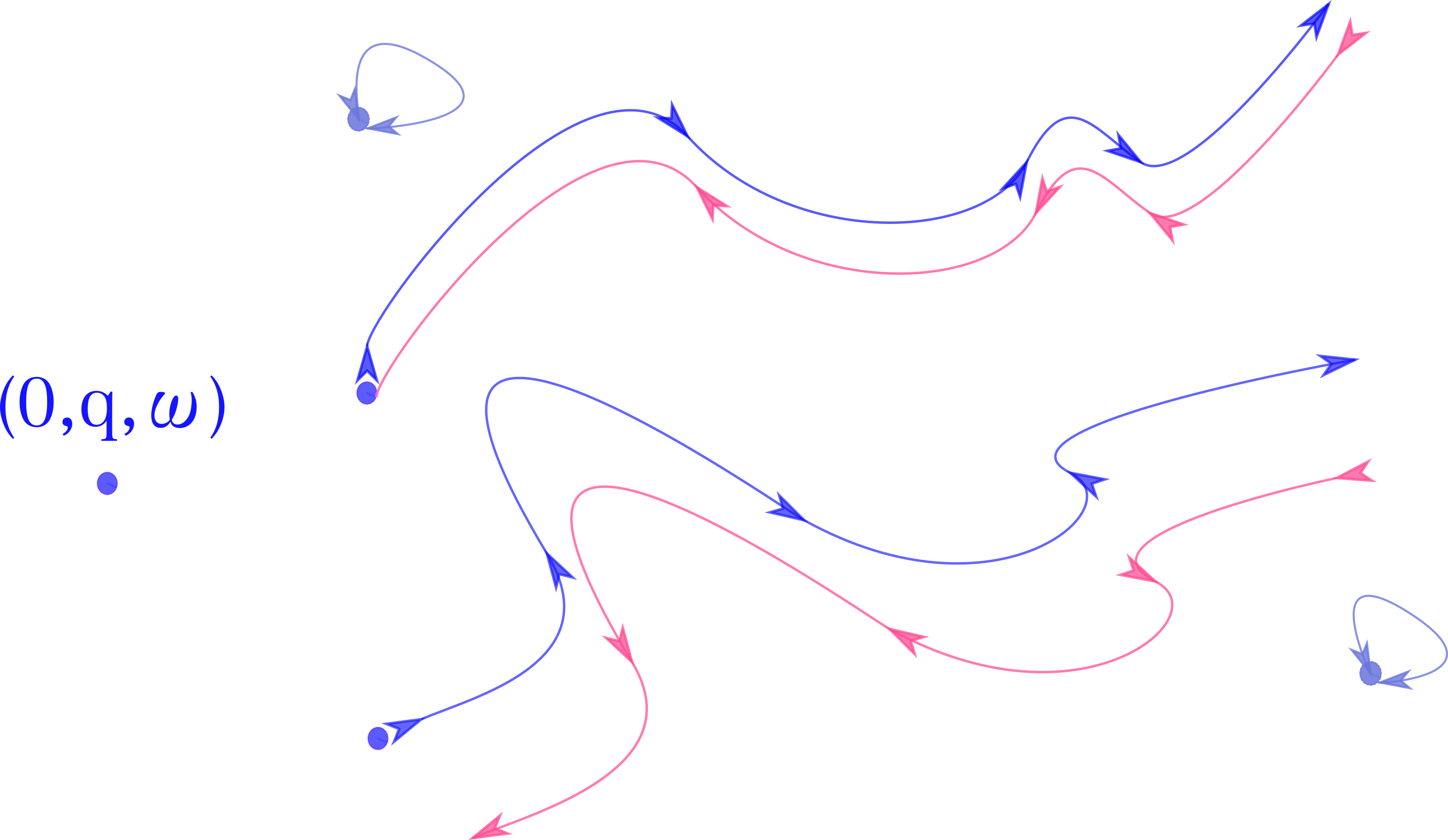} 
   \caption{Schematical illustration of forward (in blue) reverse (in orange) trajectories, which can be defined over legal triples $\{(s, z, \omega)\}$. Illegal triples can evolve under arbitrary dynamics so long as they are only mapped to illegal triples at each iteration. The indicated dynamics is the special case where every illegal state is mapped back to itself.}
   \label{fig:fig_trajectories}
\end{figure}

\subsubsection{Derivation}

Suppose that we
are provided a pair of (possibly vector-valued) functions $\mathrm{U}$ and $\mathrm{V}$,
both defined over the space $\mathcal{T} \times S \times \Sigma^\tau$,
such that the
function $g(., ., .)$ is a bijection.  As 
described above, the value
$u = u(s, z, \omega)$ is the state of the SOI and the value $v = v(s, z, \omega)$
is the state of the bath. We write the random variable of the trajectory of states of the SOI given
by the forward process as $\vu = (\vu_{0}, \ldots, \vu_{\tau})$. 
A particular value of that random variable is $(u_0, \ldots, u_\tau)$.
Similarly, we write the random variable of the trajectory of states of the SOI given
by the reverse process as $\tvu = (\tvu_0, \ldots, \tvu_\tau)$, where
a particular value of that random variable is 
$
(\tilde{u}_0, \ldots, \tilde{u}_\tau) = (u_\tau, \ldots, u_0)
$
(Note that $\tvu$ is a bijective function of ${u}$; they are not statistically independent variables
in any of the equations below.) We extend this notation in the obvious way
to the bath, defining the forward process random variable of the trajectory of the bath 
as $\vv = (\vv_0, \ldots, \vv_\tau)$ with values given by vectors $(v_0, \ldots, v_\tau)$,
and the associated reverse process trajectory as $\tvv = (\tvv_0, \ldots, \tvv_\tau)$ with
a particular value of that random variable 
written as the vector
$
(\tilde{v}_0, \ldots, \tilde{v}_\tau) = (v_\tau, \ldots, v_0)
$.

\label{sec:derivation_IFT}
Recall that the EF for any given forward trajectory is
$Q(v_0, v_\tau) := [B(v_0) - B(v_\tau)]$. Moreover, we can write
\begin{equation}
e^{Q(v_0, v_\tau)} = \frac{P\left(v_\tau\right)}{P\left(v_0\right)}
\label{eq:15}
\end{equation}
In addition,
\begin{equation}
Q(\tilde{v}_0, \tilde{v}_\tau) = Q({v}_\tau, {v}_0) = -Q({v}_0, {v}_\tau)
\label{eq:16}
\end{equation}
i.e., the EF generated by a reverse of a forward trajectory is the negative of
the EF generated under that forward trajectory. 

Given how the reverse
process is defined in terms of the forward process, with the forward trajectory being $(u_0, v_0, u_{\tau}, v_{\tau})$
and the reverse trajectory being $(u_{\tau}, v_{\tau}, u_0, v_0)$, 
\begin{equation}
\begin{split}
\frac{P(u_0, v_0, u_{\tau}, v_{\tau})}{R(u_{\tau}, v_{\tau}, u_0, v_0)} &= 
     \frac{P(u_{\tau}, v_{\tau} | u_0, v_0)P(u_0, v_0)}{R(u_0, v_0 | u_{\tau}, v_{\tau})R(u_{\tau}, v_{\tau})} \\
         &=
     \frac{P(u_{\tau}, v_{\tau} | u_0, v_0)P(u_0)P(v_0)}{R(u_0, v_0 | u_{\tau}, v_{\tau})R(u_{\tau})R (v_{\tau})}  \\
          &= \frac{P(u_{\tau}, v_{\tau} | u_0, v_0)P(u_0)P(v_0)}{R(u_0, v_0 | u_{\tau}, v_{\tau}){P}(u_{\tau})R (v_{\tau})}  \\
         &= \frac{e^{[\ln P_0(u_0)] - B(v_0)}}{e^{[\ln P(u_{\tau})] - B(v_{\tau})}} \\
         &= e^{\sigma(u_0, v_0, u_{\tau}, v_{\tau})}
\end{split}
\label{IFT}
\end{equation}
Next, as in the usual way of deriving an IFT from a DFT, we clear terms to write
\begin{equation}
e^{-\sigma(u_0, v_0, u_{\tau}, v_{\tau})} P(u_0, v_0, u_{\tau}, v_{\tau}) = R(u_{\tau}, v_{\tau}, u_0, v_0)
\label{eq:IFT_derivation}
\end{equation}
If we now integrate both sides \textit{over those quadruples $(u_0, v_0, u_{\tau}, v_{\tau})$ that
can occur in the forward process}, we get
\begin{equation}
\langle e^{\sigma} \rangle = 1 - \kappa
\label{eq:IFT_hamiltonian}
\end{equation}
In this equation,
$\kappa$ is the probability of generating a trajectory
$(u_{\tau}, v_{\tau}, u_0, v_0)$ under the \textit{reverse} process such that the \textit{forward} version of that trajectory, $(u_0, v_0, u_{\tau}, v_{\tau})$,
cannot be realized under the forward process, due to the restriction that the SOI start in its initialized state \footnote{
Note that if we were to integrate over \textit{all} quadruples $(u_0, v_0, u_{\tau}, v_{\tau})$,
that would include some for which the logarithm in the exponent in Eq.\,4 is infinite, so
that the integral would run over values for which the integrand is undefined.}. 
So the RHS of the IFT is not $1$ but rather less than $1$
because the support of the joint distribution at $t = 0$ is less than the full joint space.

\subsection{Mismatch costs for forward dynamics
of computational machines}
\label{sec:prior_costs_dfas}

Note that no matter how a given DFA is implemented as a physical
system, in general it can be initialized with an
arbitrary initial distribution $p_{0}$.
In this section, we analyze how changes to that initial distribution of a fixed physical system implementing a DFA affects the EP that is generated in any forward process of the computational cycle of the DFA \footnote{Recall that the computational cycle consist of two physical processes: one of them is the forward process where the DFA processes strings, and the other is the reinitialization. For reasons of space, we leave the extension of our analysis to reinizialiation process to future research.}.

Consider a physical system evolving from time
$t_0 = 0$ to $t_f = \tau$ in a fixed physical process. 
In general, as we vary the distribution 
over the states of the system at $t_0$,
we change the total EP generated in the
(fixed) physical process. Let $\alpha_{t_0}$ be the distribution 
which results in minimal EP of that fixed
physical process,
\begin{equation}
\alpha_{t_0}:=\underset{p_{t_0}}{\arg \min } \overline{\sigma} \left(p_{t_0}\right)
\end{equation}
Following earlier terminology in the 
literature, we refer to $\alpha_{t_0}$ as the \textbf{prior} distribution for the process. While $\alpha_{t_0}$ is an initial distribution that results in minimal EP, in general the 
process might begin with a different initial distribution $\beta_{t_0}$, resulting in a \textit{mismatch cost} $\mathcal{W}$, 
defined as the extra EP due to using a non-optimal
intial distribution,
$\mathcal{W} = \overline{\sigma}\left(\beta_{t_0}\right)-\overline{\sigma}\left(\alpha_{t_0}\right) \geqslant 0$ \cite{JStatMech, DavidWolpertSTComputation}. 

The main result of \cite{JStatMech} is that for a system evolving under a
CTMC from $t_0$ to $\tau$, 
$\mathcal{W}$ is the change from $t_0$ to $\tau$ of the
KL divergences between $\alpha_{t}$ and $\beta_t$:
\begin{equation}
\mathcal{W} =D\left(\beta_{t_0} \| \alpha_{t_0}\right)-D\left(\beta_{\tau} \| \alpha_{\tau}\right)
\label{eqn:dissipation_mis}
\end{equation}
\noindent
This result was extended in \cite{PhysRevE.104.054107, PhysRevResearch.2.033524} to apply to Langevin dynamics and to open quantum systems evolving in continuous time. In addition, 
fluctuation theorems for mismatch cost were derived
in those papers, as were 
differential formulas for the instantaneous dynamics of mismatch cost.

Recall that in \cite{PhysRevE.104.054107}
we calculated the
dissipated work incurred in the reinitialization 
process of the computational system
using the
CTMC-based approach to stochastic thermodynamics. 
Therefore we can use \cref{eqn:dissipation_mis}
to calculate the mismatch cost of that
reinitialization, getting
\eq{
\mathcal{W}_{reinit} = D\left(P_\tau(U, V) \| \alpha_\tau\right) - D\left(P_{t_0}(U, V) \| \alpha_{t_0}\right)
}
where $\alpha_\tau$ is the joint distribution over
$U \times V$ at time $\tau$ that would result
in minimal EP in the reinitialization, and
$\alpha_{t_0}$ is the form that distribution takes
after evolving according to the reinitialization
process described in \cref{PhysRevE.104.054107}.


However, we cannot apply  \cref{eqn:dissipation_mis}
to analyze the mismatch cost of the
\textit{forward} process, since unlike
the reinitialization process,
it is based on the
inclusive Hamiltonian approach, not the CTMC-based
approach. In this section we fill in this gap in the literature, by proving
that \cref{eqn:dissipation_mis} also applies
in an inclusive Hamiltonian setting, even
when the stopping time $\tau$ is a random variable \footnote{Recall that as mentioned above,
the formulation of open quantum thermodynamics involving partial traces \cite{Nielsen2000QuantumCA} is similar to the inclusive Hamiltonian framework.
Mismatch cost for that open quantum scenario is derived in \cite{PhysRevE.104.054107}. 
In particular, an appendix in that paper presents an  analysis that similar to the result derived here.}.

The derivations of \cref{eqn:dissipation_mis} in earlier analyses were based
on how the change in nonequilibrium free energy \textit{of the SOI} from the beginning to the end of the process gets modified if the initial distribution only is changed, with the thermodynamic process itself not changing.
These nonequilibrium free energies are defined as the difference between the entropy of the SOI and the expected energy of the SOI. In the inclusive Hamiltonian framework though, the EP is defined in terms of the
difference between the entropy of the SOI and the expected energy of the \textit{bath}, and so requires
a different analysis.

To begin this analysis, for simplicity we presume that $\alpha_{0}$ has full support, i.e., it is in the interior of the unit simplex \footnote{Note that
when analyzing mismatch cost we are considering changes to the
distribution over the state $u$ of the SOI only, not to the distribution over the full system. In
particular, our full support condition
only concerns the distribution over states
of the SOI. Moreover, in light of Appendix D of \cite{JStatMech} 
we can ensure that this full support condition is met if we can ensure that the distribution $P(u_\tau | u_0)$ has full support. 
In turn, one way to ensure that this
conditional distribution is by
appropriate choice of the update function of the DFA and the distribution over input strings. Another way is by introducing
appropriate stochasticity in the dynamics of illegal states.}. Under this assumption,
for any initial state distribution $\beta_{t_0}$, the directional derivative at $\alpha_{t_0}$ obeys \footnote{A more
general analysis would not need this assumption. That analysis relies on
defining ``islands'' and associated mathematical machinery \cite{Wolpert_2020}, and so we leave
it to future work.}
\begin{equation}
\left(\beta_{0}-\alpha_{0}\right) \cdot \nabla \overline{\sigma} \left(p_{0}\right)|_{p_0 = \alpha_0}=0 
\label{eq:eqn_239t7}
\end{equation}
where the gradient is with respect to the
components of the distribution $p_0$.

Note that the term
$\nabla \overline{\sigma} \left(p_{0}\right)|_{p_0} = \alpha_0$ in \cref{eq:eqn_239t7} 
is not restricted to lie in the unit simplex in general, i.e., it can point in a direction that results in probabilities that are not normalized.
Plugging in from \cref{eq:eqn_uation}, that
gradient is
\eq{
\nabla \overline{\sigma} \left(p_{0}\right)|_{p_0} =
\nabla \Delta H \left(p_{0}\right)|_{p_0}
  - \nabla \bar{Q} \left(p_{0}\right)|_{p_0}
  \label{eq:41aa}
}

Consider the second gradient in \cref{eq:41aa}.
Component $u_0$ of that gradient is
\eq{
\frac{\partial \bar{Q}(p_0, \rho_0)}{\partial p_0(u_0)}  &= \frac{\partial}{\partial p_0(u_0)} \left(\sum_v \rho_\tau(v)B(v)
    - \sum_v \rho_0(v)B(v)\right)
 \label{eq:41}
}
While $\rho_0(v)$ is independent of $p_0(u_0)$
for all states $v$, that
is not true of $\rho_t(v)$ in general, due to
interactions between the SOI and the bath
during the interval $[0, \tau]$. So
\cref{eq:41} reduces to
\eq{
\frac{\partial \bar{Q}(p_0, \rho_0)}{\partial p_0(u_0)} &= 
\frac{\partial}{\partial p_0(u_0)}
    \sum_{u_\tau,v_\tau} P_{\tau}(u_\tau, v_\tau)B(v_\tau) 
}
where $P_\tau(u_\tau, v_\tau)$ is the 
distribution at $t = \tau$ over the joint
states of the full system.
Since at $t = 0$ the joint
distribution is a product distribution,
we can expand $P_\tau(u_\tau, v_\tau)$ as
\eq{
p_0\left[F^{-1}_U(u_\tau, v_\tau)\right] \rho_0\left[F^{-1}_V(u_\tau, v_\tau)\right]
}
where as before, $F^{-1}(u_\tau, v_\tau)$ is the 
inverse dynamics function taking the final state $(u_\tau,
v_\tau)$ to the (unique) associated initial state,
$(u_0, v_0)$, with $F^{-1}_U(u_\tau, v_\tau)$  
being that associated $u_0$ and
$F^{-1}_V(u_\tau, v_\tau)$ being the associated $v_0$. Plugging in,
\begin{equation}
\begin{split}
\frac{\partial \bar{Q}(p_0, \rho_0)}{\partial p_0(u_0)} &= 
    \sum_{u_\tau,v_\tau} B(v_\tau)
    \rho_0\left[F^{-1}_V(u_\tau, v_\tau)\right]
\frac{\partial p_0\left[F^{-1}_U(u_\tau, v_\tau)\right]}{\partial p_0(u_0)} \\
&= 
    \sum_{u_\tau,v_\tau} B(v_\tau)
    \rho_0\left[F^{-1}_V(u_\tau, v_\tau)\right]
    \delta(u_0, F^{-1}_U(u_\tau, v_\tau)) \\
    &:= Q(u_0)
\end{split}
\label{eq:eqn_239t746}
\end{equation}

$Q(u_0)$ depends on $u_0$ but is independent 
of the associated value $p_0(u_0)$. Since this is 
true for all $u_0$, we can write $\bar{Q}(p_0, \rho_0)
= \sum_{u_0} p_0(u_0) Q(u_0)$, i.e., the change in
the expected heat of the bath 
is a linear function of the initial distribution
over states of the SOI.

We can combine \cref{eq:41aa} with
this result to evaluate the components of $\nabla \overline{\sigma} \left(p_{0}\right)$:
\begin{equation}
\begin{split}
\frac{\partial {\overline{\sigma}\left(p_{0}\right)} }{\partial p_0\left(u_{0}\right)} &=\left[-\sum_{u_{\tau}} p\left(u_{\tau} \mid u_{0}\right) \ln \left(\sum_{u_{0}^{\prime}} p_{0}\left(u_{0}^{\prime}\right) p\left(u_{\tau} \mid u_{0}^{\prime}\right)\right)-1\right]\\
&\qquad +\left[\ln p\left(u_{0}\right)+1\right]-Q\left(u_0)\right) \\
&=-\sum_{u_{\tau}} p\left(u_{\tau} \mid u_{0}\right) \ln p_{1}\left(u_{\tau} \right)+\ln p_{0}\left(v_{0}\right)- Q\left(u_{0}\right)
\end{split}
\label{eq:who_knows}
\end{equation}
Using equations \cref{eq:who_knows} and \cref{eq:eqn_uation}, we express the inner products in the following form
\begin{equation}
\begin{split}
\alpha_{0} \cdot \nabla \overline{\sigma}\left(\alpha_{0}\right) &=H\left(\alpha_{\tau}\right)-H\left(\alpha_{0}\right)-\bar{Q}=\overline{\sigma}\left(\alpha_{0}\right) \\
\beta_{0} \cdot \nabla \overline{\sigma} \left(\alpha_{0}\right) &=C\left(\beta_{\tau} \| \alpha_{\tau}\right)-C\left(\beta_{0}\|\alpha_{0}\right) -\bar{Q} \\
 & = D\left(\beta_{\tau} \| \alpha_{\tau}\right)-D\left(\beta_{0} \| \alpha_{0}\right)+\overline{\sigma}(\beta_{0})
\end{split}
\label{eq:eqn_239t8}
\end{equation}
where $C(p \| q):=-\sum_{x} p(x) \ln q(x)$ is the cross-entropy between two distributions. Combining with \cref{eq:eqn_239t7}, \cref{eq:eqn_239t8} gives \cref{eqn:dissipation_mis} for $\mathcal{W}$, as claimed. 

There are several comments worth making.
First, note that the same kind of reasoning used above
not only
gives the dependence of the EP on the initial distribution,
but also (for appropriately redefined priors $\alpha_{t_0}$) gives the dependence on the initial
distribution of the
change in entropy of the SOI; of the non-adiabatic EP;
and of the change in non-equilibrium free energy~\footnote{Formally, all these results follow by simply replacing the linear function $\overline{Q}(p_0)$ in \cref{eq:41aa} with some other linear function. For example, the formula for the dependence on the initial distribution of the change in entropy of the SOI is given by replacing
$\overline{Q}(p_0)$ with the ``linear'' function $0$.
See~\cite{PhysRevE.104.054107}.}. 

Second, when analyzing the thermodynamics of DFAs, 
it makes sense not only
to consider total EP generated up to a fixed time $\tau$, but also to consider 
total EP generated up to either $\tau$ or the earliest time that the DFA enters
an accept state, $\tau_{accept}(\bold{x})$, whichever comes first. In other words, 
when considering the thermodynamics of computational machines, we are often interested
in the thermodynamics up to a stopping time that is a random variable, set by the 
earliest instance when a particular stopping condition (e.g., an accept state being
reached) is met. This is true in particular for DFAs.

Formally, to accommodate such a random stopping time means that in defining the expected EP, we should average
over values $T(\bold{x}) := \min[\tau, \tau_{accept}(\bold{x})]$, as well as average over trajectories $\bold{x}$.
It turns out we can do this simply by fixing the update function of the DFA
to never again change the state of the DFA once it enters the accept state~\footnote{To 
confirm that this update function is invertible (as required by our
analysis), note that the pointer keeps changing \textit{its} value even after the 
DFA enters the accept state. This means that every legal state has a unique
legal predecessor state for this update function, as required.}. To see this,
note that since the input string doesn't change its state at \textit{any} times, 
the mutual information between the SOI and the bath won't change once the DFA enters the accept state.
So with this update function, the average of EP integrated up to the random time $T(\bold{x})$
is equivalent to the usual integrated EP, where we always integrate up to the fixed time $\tau$. 

\subsection{Inclusive formulation of communication systems} \label{sec:multilateral_decomposition_dynamics}
In this section, we apply the inclusive formulation of DFAs to communication channels. 
Specifically, we extend the formulation of the bilateral decomposition for Markov sources, so that in addition to a random process
representing an information source, there is a second random process representing the
noise in a channel that communicates the output of that information
source to a receiver. We start with \cref{dfa_coord_baths}, where we formalize this extension by introducing multiple baths (e.g., a second bath provides the
noise in that communication channel). We name the resulting decomposition which is suitable to model communication channels as the \textbf{multilateral decomposition}. In \cref{dfa_coord_baths}, we also show how to expand the formalism in \cref{sec:form_coord} to this decomposition.
In \cref{sec:rate_distortion_channel}, we present two thermodynamically motivated distortion functions for communication channels. In \cref{sec:xft_communication}, we exploit the properties of the multilateral decomposition to derive a modified version of XFTs.

\subsection{Multiple baths coordinate systems for communication channels}
\label{dfa_coord_baths}

We start by considering a Markov information source where as above, its states are written as $s$, and each state transition at iteration $t$ generates a unique associated output symbol $y$. Recall that $\omega$ is independent of $t$, being set by sampling an appropriate distribution $\mathrm{P}_0(\omega)$ before 
the SOI starts dynamically evolving. In the bilateral decomposition of Markov sources, we used one finite bath $\mathrm{B}_1$ to perturb the dynamics of the SOI, as Markov sources are only associated with one random process at a time (i.e., stochastic generation of symbols as the SOI evolves). We now extend this setup by including a communication channel which takes the input strings generated a Markov source, and provides outputs under noise. 

Note that communication channels are concerned with (at least) two distinct dynamical processes, first one being the generation of strings, and the second being the transfer of those strings via a noisy medium. To model communication channels, we ascribe this first process to the coupling of the SOI to bath $\mathrm{B}_1$, while we ascribe the latter to the coupling of the SOI to another bath to $\mathrm{B}_2$. 
The second bath manufactures the \textit{inherent} noise of a communication channel, in coherence with the usual communication theory formulation.

We write $\nu$ for a string of length $\tau$, which gives the state space of the second bath $\mathrm{B}_2$, sampled at the beginning of the computational cycle, just as $\omega$. As in Markov sources, we interpret the transition $s_t \rightarrow s_{t+1}$ as a generation of an input symbol $y_{t+1}$, which then goes through a channel $\mathrm{P}(r_{t+1} \mid y_{t+1})$, producing a channel output $r_{t+1} = (\omega[z_t] + \nu[z_t]) \, \mathrm{mod \, 2}.$ We assume that output strings of the channel are generated according to a distribution 
$\mathrm{P}(\rho \mid \omega)$, which is obtained by an IID sampling of the channel, where $\rho[z_{t+1}] = r_{t+1}$ for all $t$. 

This model of communication channels naturally leads to the following decomposition of the joint state space $\mathrm{U} \bm{\times} \Pi_{i} \mathrm{V}^{i}$, where we define the SOI state space over quadruples: $\mathrm{U}   =   \{s_t, \omega[z_t], (\omega[z_t] + \nu[z_t]) \, \mathrm{mod \, 2}, z_t\}$, and there are two distinct baths with state spaces $\mathrm{V}_1 =  \{\omega[-z_t]\}$, $\mathrm{V}_{2} = \{\nu[-z_t]\}$, respectively. We henceforth refer to this decomposition as the multilateral decomposition. In this decomposition, the set of elements of $\omega$ that
give the state of the bath $\mathrm{B}_1$ change from one iteration to the next, as well as the set of elements of $\nu$. Although, as in bilateral decomposition, we do \textit{not} change the definition of the bath from one iteration to the next in that decomposition. So in a similar way to \cref{sec:form_coord}, we provide below the bijective function $g$ for multilateral decomposition, which ensures the consistency of our approach.

Recall that $g$ maps any triple $(s, z, \omega)$ into an element of a fixed Cartesian product space $\mathrm{U} \bm{\times} \Pi_{i}\mathrm{V_{i}}$. For the multilateral decomposition, using the notation from \cref{sec:form_coord}, the space of the full system is given by $\mathcal{T} \times S \times \Sigma^{\tau + 1} \times \Sigma_{0, 1}^{\tau+1}$, where $\mathrm{U} := \mathcal{T} \times S \times  \Sigma \times \Sigma_{0, 1}$, and $\mathrm{V}_{1} := \Sigma^\tau,$ $\mathrm{V}_{2} = \Sigma_{0, 1}^{\tau}$. Here we define
three associated functions 
$g^{U}_{ml}: \mathcal{T} \times S \times \Sigma^{\tau + 1} \times \Sigma_{0, 1}^{\tau+1} \rightarrow U$, $g^{V_{1}}_{ml}: \mathcal{T} \times S \times \Sigma^{\tau + 1} \times \Sigma_{0, 1}^{\tau+1} \rightarrow \mathrm{V}_{1},$ and $g^{V_{2}}_{ml}: \mathcal{T} \times S \times \Sigma^{\tau + 1} \times \Sigma_{0, 1}^{\tau+1} \rightarrow \mathrm{V}_{2}$:
\eq{
g^{U}_{ml}(s, z, \omega, \nu) &:= (s, \omega[z], \omega[z] + \nu[z] \, \mathrm{mod \, 2}, z) \\
g^{V_{1}}_{ml}(s, z, \omega, \nu) &:= \omega[-z]
\\
g^{V_{2}}_{ml}(s, z, \omega, \nu) &:= \nu[-z]
}
which in turn allows us to express the dynamics over computational systems, including communication channels, which can be modeled by the multilateral decomposition.

\subsection{Thermodynamic rate-distortion functions}
\label{sec:rate_distortion_channel}
Assume that we have a discrete source that produces a sequence of symbols $Y_{1}, Y_{2}, \ldots, Y_{n}$ i.i.d. $\sim p(y), y \in \mathcal{Y}$, where each $Y_{k}$ assumes values in the source alphabet $\mathcal{Y}$. We consider the distortion that results when the source produces a symbol $y$, and an associated test communication channel delivers another symbol $r$ from the alphabet $\mathcal{R}$, as its representation of $y$. In general, it is possible to express such distortion as $ d\left(y_{k}, r_{k}\right)$, where $d(\cdot, \cdot): \mathcal{Y} \times \mathcal{R} \rightarrow[0, \infty)$ \cite{doi:https://doi.org/10.1002/047174882X.ch7}.

Recall from \cref{sec:preliminaries} that $Q=\{Q(r \mid y), y \in \mathcal{Y}, r \in \mathcal{R}\}$ formulates a communication channel. Given a source distribution $\{p(y)\}$, we associate with any such channel $Q$ two non-negative quantities $d(Q)$ and $I(Q)$ defined by
\begin{equation}
\begin{split}
&d(Q)=\sum_{y \in \mathcal{Y}} \sum_{r \in \mathcal{R}} p(y) Q(r \mid y) d(y, r) \\
&I(Q)=\sum_{y \in \mathcal{Y}} \sum_{r \in \mathcal{R}} p(y) Q(r \mid y) \log \left(\frac{Q(r \mid y)}{q(r)}\right)
\end{split}
\end{equation}
where
\begin{equation}
q(r)=\sum_{y \in \mathcal{Y}} p(y) Q(r \mid y).
\end{equation}

The quantities $d(Q)$ and $I(Q)$ are, respectively, the average distortion and the average Shannon mutual information associated with channel $Q$. The rate-distortion function of the i.i.d. source with symbol distribution $\left\{p(y)\right\}$ given the distortion constraint $d(\cdot, \cdot)$ is defined as the solution to the minimization problem:
\begin{equation}
R(D)=\min _{Q: d(Q) \leq D} I(Q) .
\end{equation}
which is solved over all conditional distributions $p(r \mid y)$ for which the joint distribution $p(y, r)=p(y) p(r \mid y)$ satisfies the expected distortion constraint. 

There are many possible distortion functions; which one to consider depends on the concerns of the engineer using the communication channel. We conjecture two possible, thermodynamically motivated distortion functions including EP below. 
\begin{equation}
\begin{split}
    d^{I}_{z} (y, r) & = \mathbb{E}(\sigma \mid r, z)-\mathbb{E}(\sigma \mid y, z) \\
    & = \sum_{y'}P(y' \mid r,z)\mathbb{E}(\sigma \mid r,y',z )\\
    & \qquad -\sum_{r'}P(r'\mid y,z)\mathbb{E}(\sigma \mid r',y,z)
\end{split}
\label{eq:distort_1}
\end{equation}
\begin{equation}
\begin{split}
    d^{II}_{z} (y, r) & = \mathbb{E}(\sigma \mid y, r, z)-\mathbb{E}(\sigma \mid y, z) 
\end{split}
\label{eq:distort_11}
\end{equation}

(Note that so long as the state spaces of the variables in \cref{eq:distort_1} and \cref{eq:distort_11}
are finite, we can always ensure
non-negativity of those two candidate distortion functions simply by subtracting their respective minimal values.)

\subsection{Modified XFTs for forward dynamics
of computational machines}
\label{sec:xft_communication}
In the Hamiltonian framework, an XFT is a symmetry relation concerning the heat exchange incurred during a thermodynamic process including multiple finite baths that are connected with one another, and that are initially prepared at different temperatures \cite{PhysRevLett.92.230602, PhysRevLett.123.090604}. The simplest form of an XFT, for two baths, is 
\begin{equation}
\ln \frac{p_{\tau}(+Q)}{p_{\tau}(-Q)}=\Delta \beta Q
\end{equation}
where $\Delta \beta =\beta_1 - \beta_2$ is the difference between the inverse temperatures at which the baths are prepared, and $Q$ is the heat flow between them during a thermodynamic process. 
These XFTs apply when there the baths are only connected to one another, without any SOI. 

In this subsection, we show how to derive a modified version of the XFTs for the inclusive framework, in which there an SOI coupled with multiple baths, e.g., in the inclusive framework formulation of communication channels. This XFT relates the work expended on the SOI during the forward process and the net heat flow among the baths during the forward process. (There is no simple variant of the standard inclusive Hamiltonian XFT for the reinitialization process, since that process involves coupling to infinite thermal reservoirs.)

We follow \cite{PhysRevLett.92.230602} in how we construct an XFT, considering the case of two baths for simplicity. (The generalization to scenarios involving more than two baths is straight-forward.) Write the inverse temperatures of the two baths, $\mathrm{B}_1$ and $\mathrm{B}_2$, as $\beta_1$ and $\beta_2$, respectively. We suppose 
that in the forward process the two baths are coupled to the SOI via an interaction term turned on at $t=0$ and turned off at $t=\tau$. 
We also assume that the work performed in switching all the interaction terms (both between the baths and between each bath and the SOI) on and off can be neglected. As in our derivation of an IFT, we will also consider a reverse process that is initialized to a Boltzmann distribution.

Next, recall our assumption that the initial and final Hamiltonians of the SOI are uniform. This means that the total change in the internal energy of the SOI during the forward process is zero, and the total work $W$ done on the SOI by any infinite external work reservoir changing the Hamiltonian of the SOI is simply the sum of the total heat flow into the baths, i.e., 
\eq{
W=\sum_{i} {Q}_i
\label{eq:55aaa}
}

Since both the forward process and the reverse process are initialized by sampling the respective Boltzmann distributions, the ratio of probabilities of a given forward trajectory and its reverse is 
\begin{equation}
 e^{\beta_{1} \Delta B_1 + \beta_{2} \Delta B_2}
\end{equation} 
where the changes are evaluated at respective values at the iterations $t = 0$ and $t = \tau$. Next, note that the two heat flow terms 
in \cref{eq:55aaa} are precisely $\Delta B_1$ and $\Delta B_2$, respectively.
Combining, we write the ratio of the probability of the forward trajectory to that of the reverse trajectory as
\begin{equation}
\frac{P(\vx_0)}{R(\tvx_0)} = e^{\beta_{1} Q_{1} + \beta_{2} Q_{2}} = e^{\beta_{1} Q + \beta_{2}[W - Q]} = e^{ Q \Delta \beta + \beta_2 W}
\label{eq:55}
\end{equation}
where $Q$ is shorthand for $Q_1$. Next, recall that by \cref{eq:16}, the heat transfer during a forward trajectory is the opposite of that during the reverse of that trajectory. Combining, we write
\begin{equation}
\begin{split}
p_{\tau}(Q) & = \int d \vx_0 P(\vx_0) \delta (Q- Q(\vx_0)) \\ 
& = e^{ Q \Delta \beta+ \beta_2 W} \int d \tvx_0 R(\tvx_0) \delta (Q+ Q(\tvx_0)) \\
& = e^{ Q \Delta \beta + \beta_2 W}p_{\tau}(-Q)
\end{split}
\label{eq:eqn_5143t5}
\end{equation}

Note that \cref{eq:eqn_5143t5} is in the form of the XFT derived by  Jarzynski and Wojcik, except here we have an SOI, whose Hamiltonian might change with time.

\subsection{Thermodynamic costs of computation and size complexity of DFAs}
 In this section, we prove that out of all DFAs which recognize the same language, the minimal complexity 
DFA (as defined in \cref{sec:preliminaries}) is the one with minimal EP at all iterations and for all dynamics.
\subsubsection{Additional relevant aspects of  DFAs}
\label{sec:thermo_complexity_lattice}

To proceed with our analysis, we need to first review some elementary concepts from set theory and the theory of automata.
Suppose $\alpha=\left\{A_{1}, A_{2}, \ldots\right\}$ and $\beta=\left\{B_{1}, B_{2}, \ldots\right\}$ are two partitions of $\Sigma^{*}$, with blocks $A_{i}$, and $B_{j}$, respectively. We write $\alpha \preceq \beta$ if each block $B_j$ of $\beta$ is a union of blocks of $\alpha$, and
in this case say that $\beta$ is a \textit{(partition) refinement} of $\alpha$. 
Given two partitions $\alpha$ and $\beta$, their \textit{meet}, $\alpha \wedge \beta$, (respectively, their \textit{join}, $\alpha \vee \beta$ ) is the coarsest (finest) partition which refines (is refined by) both $\alpha$ and $\beta$. The meet consists of all nonempty intersections of each block of $\alpha$ with each block of $\beta$. The join consists of the smallest subsets which are a union of blocks from both $\alpha$ and $\beta$,
\begin{equation}
\alpha \vee \beta=\left\{A_{i} \cap B_{j} \mid A_{i} \in \alpha, B_{j} \in \beta\right\}
\end{equation}
(Note that $\alpha \vee \beta = \beta$ if $\beta \preceq \alpha$.) 
Since any two blocks have such a meet and
a join, this combination of meets and joins is a {lattice}, which is known
as the partition lattice. 
 
Next, recall the definition of equivalence relation $\sim_{L}$ from \cref{sec:preliminaries}, which
provides
a partition over the strings in
$L$. 
Another equivalence relation, 
denoted by $\sim_{M(L)}$ (or just $\sim_M$
for short), is
defined over the same set as $\sim_L$,
and is specified by a particular DFA
$M$ that recognizes $L$.
Two strings $a, b$ are equivalent with respect to $\sim_{M}$ if processing them from the start state of $M$
puts $M$ in the same state:
\begin{equation}
a \sim_{M} b \Leftrightarrow f^{*}\left(q_{0}, a\right)=f^{*}\left(q_{0}, b\right)
\end{equation}
We write the partition lattice of
(the strings in) $L$ as 
$\Pi_{L}$ \cite{Birkhoff1950-BIRLT-2}.

We will sometimes refer to refinements of partitions as refinements of the associated equivalence relations. 
More precisely, we say that the equivalence relation $\sim$ is finer than the equivalence relation $\sim^{\prime}$ if
for any two $a, b \in \Sigma^{*}$
\begin{equation}
a  \sim b \Longrightarrow a \sim^{\prime} b
\end{equation}
\begin{figure}[h!]
  \centering
   \includegraphics[width=0.6\linewidth]{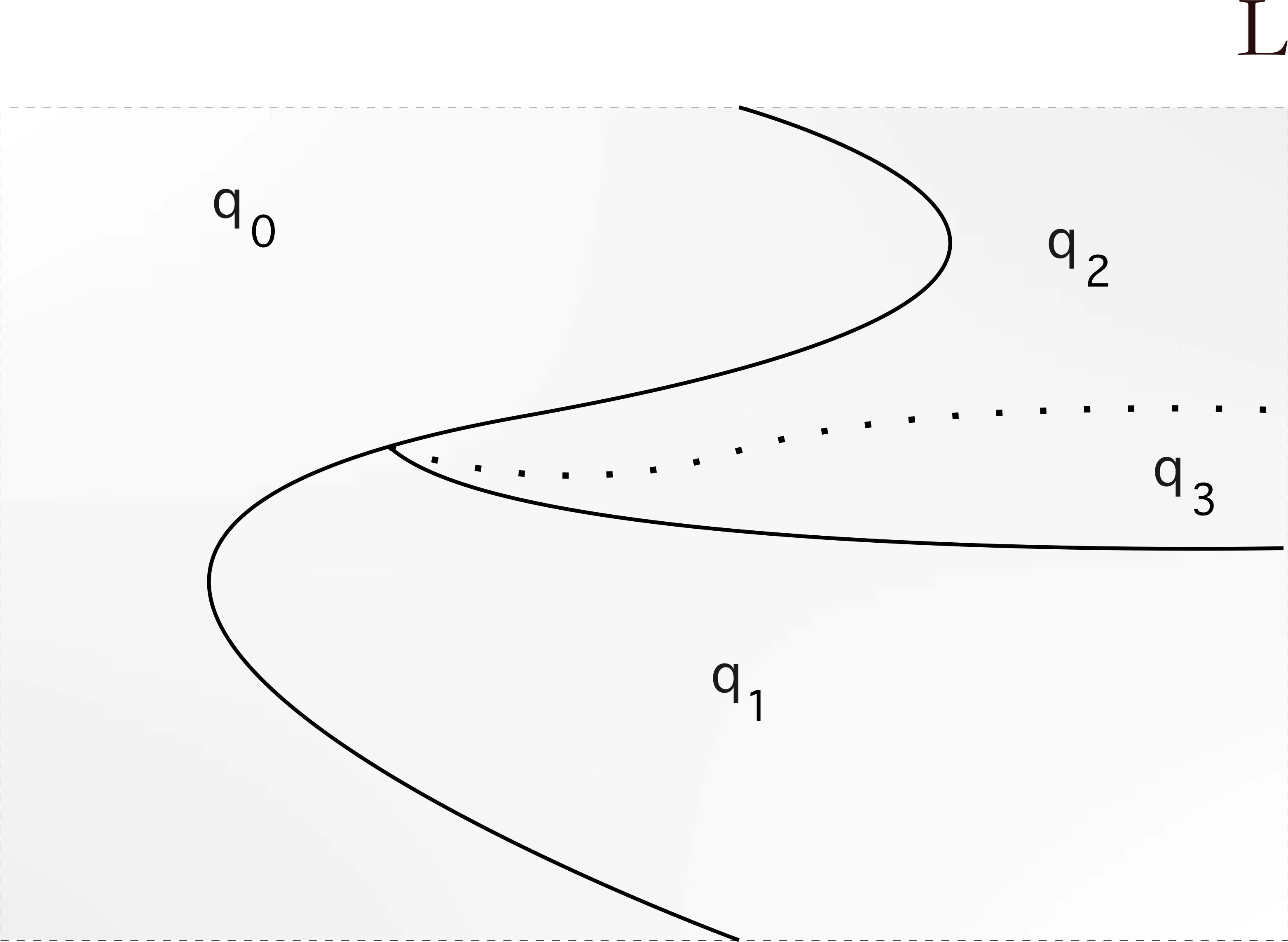}
   \caption{Bold lines distinguish the equivalence classes of the language $L = \{\omega \mid \omega \text{ is divisible by } 3\}$. $\sim_{L}$ introduces three equivalence classes over $L$, which can be bijectively mapped to the equivalence classes of $\sim_{M_{\text{min}}}$. A non-minimal DFA $M'$ which recognizes $L$ with a larger size, as in Fig. 2, is obtained by a coarser relation $\sim_{M'}$, whose equivalence classes are sketched by the bold and dashed lines combined.}
   \label{fig:fig_gh_5}
\end{figure}
   
Recall that by the MN theorem, for any language $L$, the equivalence classes of $\sim_{L}$ specify a unique minimal automaton $M_{\text{min}}$.
Moreover, the equivalence classes of $\sim_{M_{\text{min}}}$ (equivalently, the states of $M_{\text{min}}$) can be bijectively mapped to those of $\sim_L$, i.e., these equivalence classes are identical. In contrast, for any non-minimal DFA, $\sim_{M'}$ is always coarser than both $\sim_{M}$ and $\sim_{L}$ (we illustrate this by \cref{fig:fig_gh_5}). 

Let $\eta$ denote the partition of $\Sigma^{*}$ corresponding 
to the equivalence relation 
$\sim_{M_{\text{min}}}$ for some regular language $L$. Any refinement $\zeta$ of $\eta$ specifies some non-minimal DFA $M'$ which recognizes the same language $L$ that $M_{\text{min}}$ does. So the coarsest element of the lattice $\Pi_{L}$ defined above is given by the partition $\eta$. Finer elements of $\Pi_{L}$ correspond to non-minimal automata of different sizes, each formed by some refinement of $\sim_{M}$ or $\sim_L$.  
We write $H(p_{t}^{\eta})$ to express the entropy of the (probability distribution over the) equivalence classes of a minimal DFA $M_{\text{min}}$, and $H(p_{t}^{\zeta})$ to express the entropy of the (probability distribution over the) equivalence classes of a non-minimal DFA $M'$ equivalent to $M_{\text{min}}$.

\subsubsection{Minimal size complexity DFA is the one with minimal REP}
\label{sec:thermo_complexity_theorem}
In this subsection, we use the above to prove a relation between the size complexity of DFAs and the EP costs of executing DFAs.

\begin{lemma} Under the unilateral decomposition, 
at all iterations $t$, for any regular language $L$, the minimal DFA for $L$ has least EP of all DFAs that recognize $L$.
\label{lemma:1}
\end{lemma}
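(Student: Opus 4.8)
The plan is to prove the lemma in two moves. First I would show that, under the unilateral decomposition, the reinitialization EP at any iteration $t$ equals exactly the Shannon entropy of the distribution over the DFA's states at that iteration. Second I would show that, since a non-minimal DFA for $L$ is obtained from $M_{\text{min}}$ by splitting states (equivalently, $M_{\text{min}}$ is obtained by merging states, via Myhill--Nerode), that entropy can only go up.

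For the first move, I would specialize \cref{eq:eqn_11} to the unilateral decomposition. There the bath state does not change during a computational cycle, so $\rho_t = \rho_0$ for all $t$; hence $D[\rho_t\|\rho_0] = 0$ and $\overline{\sigma}_t = I(p_t;\rho_t)$. Writing $I(p_t;\rho_t) = H(p_t) + H(\rho_t) - H(P_t)$ and using (i) that the joint dynamics $\Phi$ merely permutes the probability values of the joint states, so that $H(P_t) = H(P_0)$ as noted in \cref{sec:mathematical_framework}, and (ii) that $P_0 = p_0\,\bm{\times}\,\rho_0$ with $p_0$ a point mass, so that $H(P_0) = H(\rho_0) = H(\rho_t)$, I obtain $\overline{\sigma}_t = H(p_t(u))$. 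Finally, because the pointer $z$ is advanced deterministically by $\Phi$, at a fixed iteration $t$ it takes a fixed value, so $H(p_t(u)) = H(p_t^{\zeta})$, the entropy over the states of the DFA (equivalently, over the blocks of its state partition $\zeta$).

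For the second move, I would fix any $M'$ recognizing $L$ with state partition $\zeta$ and compare it with $M_{\text{min}}$, whose state partition $\eta$ is the coarsest element of $\Pi_L$ and, by \cref{sec:thermo_complexity_lattice}, is refined by $\zeta$. Running either machine on the random input string $\omega \sim \rho_0$, the state at iteration $t$ is the result of processing the prefix $\omega[1]\cdots\omega[t]$ from the start state, so $p_t^{\zeta}$ is the image of $\rho_0$ under $\omega \mapsto f^{*}_{M'}(q_0,\omega[1]\cdots\omega[t])$ and $p_t^{\eta}$ is its image under $\omega \mapsto f^{*}_{M_{\text{min}}}(q_0,\omega[1]\cdots\omega[t])$. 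Since $\zeta$ refines $\eta$, the map $c$ sending each block of $\zeta$ to the unique block of $\eta$ containing it is well defined, and it is transition-compatible: $c(f^{*}_{M'}(q_0,a)) = f^{*}_{M_{\text{min}}}(q_0,a)$ for every string $a$ --- including prefixes ending in blanks, since every DFA recognizing $L$ leaves its state unchanged on a blank. Hence the state of $M_{\text{min}}$ at iteration $t$ is $c$ applied to the state of $M'$ at iteration $t$, i.e. $p_t^{\eta} = c_{*}\,p_t^{\zeta}$; as entropy cannot increase under a deterministic map, $H(p_t^{\eta}) \le H(p_t^{\zeta})$, and therefore $\overline{\sigma}_t^{M_{\text{min}}} \le \overline{\sigma}_t^{M'}$ at every iteration $t$ (with equality possible, e.g. at $t = 0$, where both distributions are point masses).

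The main obstacle is the first move: one must argue carefully that the REP collapses to $H(p_t^{\zeta})$ in the unilateral decomposition, in particular in bookkeeping the pointer variable $z$ and the convention of padding strings shorter than $\tau$ with blanks. Once that identity is in hand, the rest is the elementary statement that coarse-graining cannot increase Shannon entropy, applied to the state-merging map supplied by the Myhill--Nerode theorem; the only extra point to verify is that this merging map respects the blank-padding, which it does because all DFAs recognizing $L$ act as the identity on the blank symbol.
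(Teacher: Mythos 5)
Your proposal is correct and follows essentially the same route as the paper: reduce the unilateral-decomposition EP to the Shannon entropy of the DFA-state distribution at iteration $t$ (the bath is static so the EF/KL term vanishes, and the initial state distribution is a point mass), then invoke the Myhill--Nerode refinement relation between the state partitions of $M'$ and $M_{\text{min}}$ together with the fact that coarse-graining cannot increase entropy. The only differences are cosmetic --- you pass through the mutual-information form of \cref{eq:eqn_11} rather than directly through $\Delta H - \bar{Q}$ with $\bar{Q}=0$, and you make explicit the transition-compatible quotient map $c$ (including its action on blank padding) that the paper leaves implicit when it cites the partition-entropy inequality.
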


\begin{proof}
Without loss of generality, assume that any two DFAs in $\Omega(L)$ process the same set of strings sampled at time $t=0$.
Since there is a bijection between the set of the states of any DFA $M$ and the 
equivalence classes of $\sim_{M}$, the
entropy of the distribution over the states of the DFA $M$ at any iteration equals the entropy of the distribution over the equivalence class $\sim_M$ at that iteration.
So in particular, $H(p_{t}^{\zeta})$ is
the entropy at iteration $t$ of the distribution over the states of the non-minimal DFA $M'$, and $H(p_{t}^{\eta})$ is
the entropy at iteration $t$ of the distribution over the states of the minimal DFA $M_{\text{min}}$. 

Next, a classic result in ergodic theory is that for any distribution over a state space,
and any two partitions of that state space, $\alpha, \beta$, where $\beta$ is a refinement of $\alpha$,
$H(\beta) \geq H(\alpha)$, where $H(.)$ denotes the Shannon entropy of a distribution over a partition. (Proof follows by the chain rule of entropy for partitions \cite{Arnold1968ErgodicPO}.)  
So in particular, $H(p_{t}^{\zeta}) \geq H(p_{t}^{\eta})$. Combining establishes that at any iteration $t$
\begin{equation}
H_{M'} \geq H_{M_\text{min}}
\label{eq:eqn_partition_entropy}
\end{equation}
where we use $H_{M}$ as the shorthand notation for the entropy over the associated DFA states. 

Next, since the dynamics of $z$
is deterministic, 
\begin{equation}
\begin{split}
H(p_t(s, z)) &= H(p_t(s | z)) + H(p_t(z)) \\
&= H(p_t(s)) 
\end{split}
\end{equation}
for any fixed iteration $t$. So under
the unilateral decomposition, the change in entropy from iteration $0$ to $t$ of the (distribution over) the states of the SOI reduces to the associated change in the entropy of the (distribution over) the state of the DFA.

Moreover, because
all DFAs are initialized to a single state, $H_{t=0} = 0$, and so $\Delta H_{t = \tau} = H_{t = \tau}$. Using this, we write \cref{eq:eqn_partition_entropy} as
\begin{equation}
\Delta H_{M'} \geq \Delta  H_{M_\text{min}}
\label{eq:eqn_proof_lemma_1}
\end{equation}

In the unilateral decomposition, the lower bound on the dissipated work for any DFA is given by the change in the Shannon entropy over dynamical evolution. Hence, \cref{eq:eqn_proof_lemma_1} completes the proof.
\end{proof}

Lemma V.1 only concerns the unilateral decomposition. However, we can use it
to derive a general result that holds for all decompositions:

\begin{theorem} 
Under any decomposition, 
at all iterations $t$, for any regular language $L$, the minimal DFA for $L$ has least EP of all DFAs that recognize $L$.
\end{theorem}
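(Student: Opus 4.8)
The plan is to reduce the general statement to \cref{lemma:1}, using the EP formula \cref{eq:eqn_11} to isolate the single ingredient of $\overline{\sigma}$ that depends on which DFA recognizes $L$. Fix any decomposition meeting the desiderata of \cref{sec:ground_inclusive_thermo}, let $M'$ be an arbitrary DFA recognizing $L$ and $M_{\mathrm{min}}$ the minimal one, and --- exactly as in the proof of \cref{lemma:1} --- assume without loss of generality that the two machines process the same strings sampled at $t=0$ and that all states of $M'$ are reachable. At iteration $t$ write $\overline{\sigma}_{M}=I(p_t;\rho_t^1;\dots;\rho_t^N)+\sum_i D[\rho_t^i\|\rho_0^i]$.

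The first step is to note that the bath contribution is identical for $M'$ and $M_{\mathrm{min}}$. For a DFA the only source of randomness is the input string, so in any such decomposition the state of each bath at any iteration is a function of the pointer $z$ and the string $\omega$ alone; and the joint dynamics acts on $(z,\omega)$ simply by incrementing $z$ and leaving $\omega$ fixed, so the transition function $f$ never enters the evolution of the baths. Hence each marginal $\rho_t^i$ depends only on the common input-string distribution, not on the DFA, and $\sum_i D[\rho_t^i\|\rho_0^i]$ --- equivalently the entropy flow $\bar Q$ of \cref{eq:eqn_uation} --- is the same for the two machines. It therefore suffices to show $I(p'_t;\{\rho_t^i\})\ge I(p_t;\{\rho_t^i\})$ at every $t$.

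The second step supplies this via the same coarse-graining mechanism that powers \cref{lemma:1}. By the Myhill--Nerode theorem the states of $M'$ refine the classes of $\sim_L$, so there is a surjection $\pi:S'\to S_{\mathrm{min}}$ with $\pi(q_0')=q_0$ and $\pi(f'(s',y))=f_{\mathrm{min}}(\pi(s'),y)$; iterating, $\pi$ sends the state of $M'$ after reading any prefix to that of $M_{\mathrm{min}}$ after reading the same prefix. Since the logical state $s$ is a component of the SOI in every decomposition and the other SOI coordinates (the pointer, and the current symbol $\omega[z]$ in the bilateral case) coincide for the two machines, applying $\pi$ to the $s$-component is a coarse-graining of $M'$'s SOI onto $M_{\mathrm{min}}$'s SOI that commutes with the full dynamics (including the reset at $z=\tau$, where $\pi(q_0')=q_0$). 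So at every iteration $t$ the joint distribution over (SOI, baths) for $M_{\mathrm{min}}$ is a coarse-graining of that for $M'$, obtained by coarsening only the SOI variable. Multi-information cannot increase under such a coarsening: with $\tilde U$ the coarsened SOI variable, $I(U;V^1;\dots;V^N)-I(\tilde U;V^1;\dots;V^N)=I(U;V^1,\dots,V^N\mid\tilde U)\ge0$. Thus $I(p'_t;\{\rho_t^i\})\ge I(p_t;\{\rho_t^i\})$ and, with the first step, $\overline{\sigma}_{M'}\ge\overline{\sigma}_{M_{\mathrm{min}}}$ at all $t$. (Equivalently, from $\overline{\sigma}=\Delta H-\bar Q$: the $\bar Q$'s cancel, the $t=0$ SOI entropies agree because every DFA starts in a single state, and $H(p_t^{U'})\ge H(p_t^{U})$ since entropy cannot increase under the $\pi$-induced coarsening --- which is exactly \cref{lemma:1} applied to the full SOI instead of to the DFA state alone.)

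The step needing the most care is the interface between the two coordinate systems: one must check that applying ``the same decomposition'' to two different DFAs yields bath coordinates built by the same $f$-independent recipe and an SOI that still contains $s$, so that the Myhill--Nerode homomorphism lifts to a dynamics-respecting coarse-graining of the Cartesian-product state space. This is immediate from the explicit definitions of the unilateral and bilateral decompositions; granting it, the remaining inputs are the elementary facts that a function of a random variable has no greater entropy than the variable, and that multi-information is monotone under coarsening of one of its arguments --- both valid at every iteration, giving the ``at all iterations'' conclusion.
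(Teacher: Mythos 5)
Your proposal is correct and follows essentially the same route as the paper: both start from $\overline{\sigma}=I(p_t;\rho_t)+D(\rho_t\|\rho_0)$, observe that the bath/KL term is common to all DFAs recognizing $L$ processing the same string ensemble, and show the mutual-information term is minimized by the minimal DFA because its state partition is the coarsest refinement consistent with $\sim_L$. Your version merely makes explicit two points the paper leaves implicit --- that the bath marginals are $f$-independent (the paper assumes this WLOG) and that the Myhill--Nerode surjection $\pi$ intertwines the dynamics so that the inequality $I(U;V\mid\tilde U)\ge 0$ applies at every iteration --- which is a welcome tightening rather than a different argument.
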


\begin{proof}
Without loss of generality, assume that the bath dynamics are identical for any two DFAs in $\Omega(L)$. In other words, we consider any two equivalent DFAs that process the same arbitrary set of strings being generated under some fixed dynamics. This allows us compare the dissipation costs depending only on the size complexity.
Here for simplicity, we write the expected EP in the form of \cref{eq:eqn_11}, considering one bath (although our proof can be simply generalized to multiple baths)
\begin{equation}
\overline{\sigma}=I\left(p_{t} ; \rho_{t}\right)+D\left(\rho_{t} \| \rho_{0}\right)
\label{eq:eqn_110}
\end{equation}
Note that the second term on the RHS is the same for any two equivalent DFAs which process the same set of strings generated by identical physical processes. However, the first term on the RHS, which gives the mutual information between the SOI and the bath at any iteration $t$, might differ for equivalent minimal and non-minimal DFAs. We express this term for a minimal and a non-minimal DFA, respectively, as follows
\begin{equation}
\begin{split}
&I(p_{t}^{\zeta}; \rho_{t}) = H(\rho_{t})-H(\rho_{t} \mid p_{t}^{\zeta}) \\
&I(p_{t}^{\eta}; \rho_{t}) = H(\rho_{t})-H(\rho_{t} \mid p_{t}^{\eta})
\end{split}
\end{equation}
Since conditioning cannot increase entropy, $H(\rho_{t} \mid p_{t}^{\zeta}) \leq H(\rho_{t} \mid p_{t}^{\eta})$. Hence, $I(p_{t}^{\zeta}; \rho_{t}) \geq I(p_{t}^{\eta}; \rho_{t})$. Substituting to \cref{eq:eqn_110} completes the proof.
\end{proof}

Recall that in conventional, CTMC-based ST,
the EP given by a coarse-grained dynamics of a physical system is an upper bound
on the actual EP given by the fine-grained dynamics of that system. 
Thus, if all we’re interested in is achieving a certain coarse-grained dynamics, the information concerning the fine-grained dynamics 
within each coarse-grained bin (of the phase space where dynamical evolution takes place) is redundant. Such redundancy results in extra EP without
any compensating benefit for implementing the desired coarse-grained dynamics \cite{PhysRevE.87.041125}.

This line of reasoning does not apply directly in the inclusive framework, since
the physical system implementing a minimal DFA will not be a coarse-grained version of 
a physical system implementing a non-minimal DFA that recognizes the same language, in general. However Theorem V.2 provides an analogous result. 
The partition over the set of all strings processed by a non-minimal DFA is finer-grained than
it needs to be. Essentially, to avoid extra EP, it suffices to implement the same computation with a DFA whose partition would give a minimal number of bins, while retaining the needed information about the actual dynamics. For all regular languages, this is given by the minimal DFA. What Theorem V.2 shows is that the EP of that minimal DFA is the least that’s possible.

\section{Related literature}
\label{sec:remarks_dfas_paper}

There is a lot of previous research at the intersection of statistical physics and CS which
focuses on using the formal tools that have been developed for analyzing systems that are in a local thermal equilibrium and applying those tools to questions in CS \cite{Percus2006ComputationalCA}. One example of this work is the use of the replica method to analyze neural nets \cite{https://doi.org/10.48550/arxiv.2006.00256}. Another is the use of spin glass models and associated phenomena like phase transitions to investigate computational complexity questions \cite{10.5555/2086753, Percus2006ComputationalCA, MARTIN20013}.

This previous research does not concern the energetic costs of real physical systems that perform computation, which is the focus of the research presented in this paper.
There has already been some preliminary research on this issue.
Most of this work on the thermodynamics of computation thus far focused on TMs. Early work mainly addressed TMs evolving under deterministic and logically reversible dynamics. More recently, \cite{Kolchinsky2019ThermodynamicCO} provided an ST analysis of general-purpose TMs performing irreversible computation. However, it is well known that even for basic calculations it is not possible to provide an \textit{a priori} upper bound on the amount of tape a TM will use for any computational task \cite{10.5555/1540612}. (In fact, in part due to this, the CS research developed DFAs which can model real computers with finite resources \cite{5392601}).

It is possible to compare the analyses of 
the inclusive framework to the
standard ST analyses of computational machines,
e.g., to those of TMs. In these standard analyses, the EP is taken to be zero.  So all that can be calculated is the EF, which (under the usual uniform Hamiltonian assumption)
is given by the drop in entropy from the beginning to the end of the computational cycle.
That EF is equated with the minimal amount of work that
needs to be applied by an engineer, for a computational system to complete a computational cycle.

Furthermore, under this standard approach, it's not clear how to deal with
systems that get streams of stochastic inputs (like a DFA), as the computational cycle unfolds,
rather than getting all their
inputs at once, when the system is initialized
(like a TM does, as the initial string on its input tape) \footnote{Of course, one could
apply the same trick we do in our inclusive analysis of DFAs,
to re-express a stream of inputs as a single 
string of inputs that is determined when the system is initialized.
If one does that though, then the entropy drop of the full system
during a computational cycle is zero, and the analysis is rendered vacuous.}. Note as well that in this approach, if one re-initializes the
system after a run, all of the expended work is recovered, and the physically vacuous conclusion is that there is zero work, independent of the
details of the computational system.

As in the standard approach, we assume that a lower bound on the 
dissipated work
during the forward process is zero. Unlike the standard approach though,
we note that there is in general nonzero dissipated work 
in the re-initialization. This follows from the observation underlying our inclusive formulation, that there is always a difference between
accessible and inaccessible degrees of freedom. This
difference results in a strictly positive value for the minimal dissipated work
that occurs in a complete cycle of a computational system. It also allows
us to, e.g., derive a non-trivial IFT, whereas in the 
standard approaches with semi-static dynamics assumption, the 
distribution over EP values is a delta function about the value zero. As another generalization, we point out that prior work mostly assumed that a computational machine is implemented by a CTMC. Our inclusive framework allow the dynamics to be non-Markovian as well.

We propose that an inclusive thermodynamics of computation will offer rigorous tools to both advance theoretical research and engineer devices with enhanced performance. For instance, digital computers include hardware-implemented finite automata, and DFAs and NFAs are used to build systematic approaches for designing sequential circuits. The optimal design procedures of finite automata allow optimal or near-optimal physical circuit implementations \cite{10.5555/2500983}. Besides, in traditional architectures (CPU or GPU based), execution of an NFA may lead to a state space explosion leading to exponential run-time complexity. Hence, special-purpose architectures are required to run them \footnote{Micron’s Automata Processor (MAP) uses massively parallel in-memory processing capability of dynamic random-access memory for running the NFAs, so it can provide orders of magnitude performance improvement compared to traditional architectures \cite{6719386_MAP}. MAP is the first non-Von Neumann semiconductor device which can be programmed to execute thousands of NFAs in parallel to identify patterns in a data stream.}. We expect that the relations between the structural constraints, CS-based complexity, and thermodynamic bounds on computation will yield valuable design-inputs for next generation of the processors in computers \footnote{For instance, \cite{10.1145/3079079.3079100} corroborates our motive to particularly study the relations between energetics and size complexity, where an NFA partitioning algorithm that minimizes the number of state replications is introduced to maintain functionality --provided by a larger equivalent NFA-- with increased performance.}.

There might also be advantages for computers
run to analyze data sets in the more general natural sciences. For instance, modern biology research uses DFAs for the characterization of protein sequences through the identification of motifs present in them. High efficiency DFAs work as an accelerator for many applications in bioinformatics and data-mining \cite{protein_patterns_MAP}. It is well-known that the computational efficiency of DFAs is encoded in the descriptional complexity of DFAs, which we showed to be related to the thermodynamic costs of physically executing DFAs.

Finally, there have been some recent papers that considered non-equilibrium thermodynamics of finite automata, using a different approach from ours \cite{Finite_state_thermo_no, doi:10.1098/rsfs.2018.0037, https://doi.org/10.48550/arxiv.2208.06895}. 
In particular, \cite{https://doi.org/10.48550/arxiv.2208.06895} 
uses the CTMC version of stochastic thermodynamics, and
starts by recognizing that real-world (synchronous) physical
computers use a central clock to ensure that their dynamics
is periodic. As a result, the prior for the dynamics
at each successive iteration of
such a computer is the same. However, the distribution 
over states of the computer will vary from one iteration
to the next. Thus, there is unavoidable mismatch cost
in such a computer. \cite{https://doi.org/10.48550/arxiv.2208.06895}
analyzes this mismatch cost for DFAs, along with the ``local''
mismatch cost arising if the rate matrix governing the
dynamics of the DFA's state only has access to the current
input symbol rather than 
the full input string. In contrast, our
paper uses the inclusive Hamiltonian version of
stochastic thermodynamics, and considers the full EP,
not just mismatch cost. Also, we make 
no assumptions that the process is ``periodic'' 
or ``local'' in any sense.

In addition, \cite{PhysRevE.104.054107} derives a formula for expected mismatch cost for the classical setting and a non-infinite bath. That is very similar to
what we do here. However,~\cite{PhysRevE.104.054107} explicitly assumes that the SOI 
has a continuous-state state space, not a finite or countably infinite state space like the ones
considered in this paper. In addition, ~\cite{PhysRevE.104.054107} explicitly 
states that its results hold for the Hamiltonian of mean force formulation of EP, which differs from
the inclusive Hamiltonian formulation of EP considered in this paper~\footnote{In the 
Hamiltonian of mean force formulation, the initial joint distribution
of the SOI and the bath(s) is not a product distribution, in contrast to the case with the inclusive
Hamiltonian formulation considered in this paper. 
In addition, the precise definition of EP differs in
the two formulations, despite what is implied by the text above Eq. 20
of~\cite{Strasberg2017StochasticTI}.
(See Appendix A of~\cite{Strasberg2017StochasticTI}
for a more careful discussion confirming that the two formulations of EP differ)}. Finally, the results in~\cite{PhysRevE.104.054107} all assume that there is a fixed time $\tau$ 
at which the process ends, and considers all EP generated at times $t$ up to $\tau$.
It does not consider the case where there is a random variable giving the time
that a stopping condition is first satisfied, where the EP of interest is only
the EP up to that stopping time. The results in this paper also apply to the
case where the EP is integrated up to when a stopping condition is met. 

\section{Discussion}
\label{sec:discussion}

In this paper, we introduced a novel approach to quantify the thermodynamic costs of computation in systems operating far-from-equilibrium. In particular, we considered DFA-based computational machines, and derived a lower bound on the dissipated work that occurs in a cycle of any physical process that implements those machines. We showed that this lower bound is independent of the details of the physical process, and is formally identical to the irreversible EP arising in the Hamiltonian formulation of ST.

There are several possible directions for future work. One is to apply the inclusive framework to other computational machines in the Chomsky hierarchy, e.g., pushdown automata, or TMs. In particular, suppose
we have a conventional finite
cellular automaton (CA), whose initial state is
determined randomly, after which is evolves
in a deterministic process~\cite{Wolfram1983StatisticalMO}.
Such a system can be modeled as
a slight variant of the unilateral decomposition of a DFA, by identifying
each state of the DFA with a different state of the CA. For simplicity, 
suppose that the
CA state is a binary string of length $k$,
with the dynamics of that string given
by the CA evolution rule.
So the associated DFA has $2^k$ states.
We suppose as well the bath's state is a string of $k$ bits. At $t = -1$,
the string in the bath is generated by
random sampling. Then at $t=0$,
the state of the DFA is set to that
state of the bath, i.e., the bit
string in the bath is copied into
the state of the DFA / CA. After that
the DFA / CA undergoes deterministic dynamics, according to the associated
evolution rule of the CA. The analysis
above for the inclusive thermodynamics of
DFAs under the unilateral decomposition carries over almost directly, with minimal
change.

Another interesting direction for future work might explore more elaborate problems 
in CS. For instance, a way of building computational machines that recognize more complex regular languages is to combine simple languages by using Boolean operations, and construct DFAs that recognize the composite languages obtained from the elementary ones. It is an open question whether or not there are any concise relations between thermodynamic costs concerning such composite systems and the complexity of associated language operations. Along these lines, we also hope to investigate potential connections to generative grammars \cite{DeGiuli2019RandomLM} in the future work. Apart from that, it might be possible to translate our framework into the language of dynamical systems theory, vice versa, e.g., by considering the shift processes of DFAs \cite{Kitchens1997SymbolicDO, Delvenne2006DecidabilityAU}. All in all, we anticipate that there remains much more to understand about how our thermodynamic framework can be utilized for a further exploration of (the theory of) automata and formal languages, in a way that is reminiscent of von Neumann's notes \cite{Neumann1961JohnVN}.

Another open question is how the 
thermodynamics in the inclusive framework changes if we allow the length of the string input to the DFA to vary stochastically, and to then consider running a given DFA with a randomly set string length multiple times in succession \cite{PhysRevX.7.011019}.

Future
work might also involve extending our mismatch cost result to performing an analysis for the mismatch cost of the reinitialization process. (Note that the
reinitialization process ends with the 
distribution over states of the SOI being a delta function. Analyzing the mismatch cost for such a process requires the use of islands, in general.)
Analogously, the IFT derived in the main
text concerns the inclusive Hamiltonian definition of EP generated during the forward process. Some important future work would be to derive an IFT concerning the REP instead.

We conclude with two remarks. First, recall that reinitialization involves two parallel dynamic processes. The first is a quasi-statically slow change of a Hamiltonian of the SOI, so that the distribution of states of the SOI changes from $P_{\tau}(u_{\tau})$ back to its initialized form (which in the case of DFAs is a delta function). The SOI is in a Boltzmann distribution for its associated Hamiltonian throughout this process.

In contrast, the second process is a free relaxation of the bath under the Hamiltonian whose Boltzmann distribution is the initial distribution of the states of the bath.  That relaxation takes the distribution $P_\tau(v_\tau)$ to the initial distribution over bath states without any change to the Hamiltonian.  The distribution of values of the EP during the reinitialization is given by subtracting the random variable of the work required in the first process from the random variable of the work extracted in the second process. In general, that distribution will differ from the distribution of values of $\sigma$, calculated above. 
This is true despite the fact that the
expected EP in the forward process must
exactly cancel the expected EP in the 
reinitialization process, since the sequence of reinitialization after a forward process results in the exact same distribution over
states of the joint system. We refer the reader to \cref{sec:app_33} and \cref{sec:app_44} for more comments on this.

Finally, it is important to note that while DFAs and HMMs are closely related mathematically,
their inclusive thermodynamic properties are quite different. Mathematically, HMMs are somewhat related to what in computer science are called ``automata groups’’,
and are called ``information ratchets in \cite{DavidWolpertSTComputation}.
However again, this has no implications for how the inclusive thermodynamics of HMMs is related to 
analyses in the literature based on CTMCs for the thermodynamics of automata groups. So in particular,
our HMM results do not apply to the information
ratchets considered in detail in \cite{Boyd2017}.

\section{Acknowledgments}
David Wolpert was supported by the Santa Fe Institute and the 
National Science Foundation award CCF - 2221345. 
Gülce Kardeş thanks Matteo Marsili for stimulating discussions, and Pedro Harunari and Artemy Kolchinsky for very useful feedback on this manuscript.
Gülce Kardeş acknowledges support by the Quantitative Life Sciences, International Centre for Theoretical Physics.

\section{Appendices}
\appendix

\section{Expression for HEP}
\label{sec:og_ep}
As in \cite{massimiliano_lindenberg_vandenbroeck}, consider a finite setup composed of a finite SOI and finite bath(s).  Initially, at time $t = 0$, the system and the baths are decoupled, 
$P(0)=p(0) \prod_i \rho_i^{\mathrm{eq}}$. In turn,
the baths are initialized to  
\begin{equation}
    \rho_i(0)=\rho_i^{\mathrm{eq}}=\exp \left(-\beta_i \mathrm{H_i}\right)/ Z_i
\end{equation}
with $\mathrm{H_i}$ and $Z_i$ the corresponding bath Hamiltonian, and the partition function at $t=0$, respectively. The time evolution of all degrees of freedom in the joint system of the SOI and baths is governed by 
Hamiltonian (invertible) dynamics, with 
a Hamiltonian $\mathrm{H}(t)=\text{H}_{system}(t)+\sum_i \mathrm{H_i}+V(t)$. Here, $V(t)$ is an interaction term coupling the system and the baths, with the result
that considered in isolation, the SOI
evolves stochastically rather than deterministically.

\cite{massimiliano_lindenberg_vandenbroeck} 
analyzes this scenario. This analysis shows 
that the change in the Shannon entropy of the system from the initial time $t=0$
to any later time $t>0$
can be written as $\Delta H(t)=\Delta_i S(t)+\Delta_e S(t)$, with the conventional thermodynamic interpretation of
$\Delta_e S(t)$ as the change in expected energy of the
baths divided by the associated temperatures (see Eqn. 1--7 in \cite{massimiliano_lindenberg_vandenbroeck}). In our paper, we use a different notation ($\bar{\sigma}$ for $\Delta_i S(t)$ and $\bar{Q}$ for $\Delta_e S(t)$) and mainly consider classical systems. However, the decomposition of $\Delta H(t)$ in our paper is identical
to that in \cite{massimiliano_lindenberg_vandenbroeck}. Moreover, like the decomposition
in \cite{massimiliano_lindenberg_vandenbroeck}, our REP result applies to both classical and quantum systems.

\section{EP and EF for a Markov source processing semi-infinite strings}
\label{sec:app_1}
Here, we expand the EP and EF expressions presented in \cref{sec:inclusive_formulation_DFAs_and_information_sources} for Markov sources which process semi-infinite strings (hence the state space is countably infinite).

Recall that in the bilateral decomposition of Markov sources, we identify the finite set of triples \{$s, z, \omega[z]$\} as the SOI state space $\mathrm{U}$, while the set of all possible \{$\omega[-z]$\} correspond to the bath state space $\mathrm{V}$. The bath represents the data stream which provides the subsequent symbols to be processed by the SOI. As in the main text, we write $\omega[z = t] = y$. Notably
in the bilateral decomposition, the state of the bath $\omega[-z]$ changes in time, thus resulting in non-zero EF.

We formulate the expected energy of the bath at iteration $t \geq 1$ with $B(.)$ the Hamiltonian of the bath, and $\xi$ is a semi-infinite string tracked by the pointer,
\begin{equation}
\begin{split}
    \mathbb{E}(B_t) &= \sum_{\xi_{0:\infty}} P_{t} B_{t}(\xi_{0:t-1}, \xi_{t+1:\infty})  \\
       &= -\sum_{\xi_{0:\infty}} P_{t}
      \ln P(\omega_{1:t} = \xi_{0:t-1},\omega_{t+1:\infty} = \xi_{t+1:\infty})
\end{split}
\end{equation}
where $P_{t} = P_{t}(\omega_{0:t-1} = \xi_{0:t-1}, w_t = \xi_t, \omega_{t+1:\infty} = \xi_{t+1:\infty}).$ Since for general semi-infinite strings $\zeta_{1:\infty}$, we have
\begin{equation}
    B(\zeta) = -\ln P(\omega_{1:\infty} = \zeta_{1:\infty})
\end{equation}
EF for the bath at iteration $t = \tau$ can be written as
\begin{equation}
\bar{Q} =  \mathbb{E}(B_{t=0})-\mathbb{E}(B_{t=\tau})
\end{equation}
In certain scenarios we can analytically calculate EP and EF, given by these equations above. Consider for instance the Markov source which generates strings by emitting symbols drawn from a distribution $\pi(y \mid s)$, which is independent of $s$. So the change in the expected energy of the bath from $t=0$ to $t = n > 0$ is 
\eq
{\bar{Q} :&
    = \sum_{\xi_{1:n}} \ln [P(\omega_{1:n} = \xi_{1:n})] P(\omega_{1:n} = \xi_{1:n}) \\
    & \quad -\sum_{\xi_{1:n}} \ln [P(\omega_{1:n}  = \xi_{1:n})] P(\omega_{0:n-1} = \xi_{1:n}) \\
    & = - H(P(\omega_0)) - \sum_{i=1}^{n} H(P(\omega_i  \mid \omega_{i-1})) \\
    & \quad - \sum_{\xi_{1:n}} \ln [P(\omega_{1:n} = \xi_{1:n})] P(\omega_{0:n-1} = \xi_{1:n})
\label{eq:11111}
}
Similarly, since in any iteration $t$, $z = t$ with probability $1$, the change in entropy is
\eq{
\Delta H &:= H(P(S_n, Z_n, \omega_n)) - H(P(S_0, Z_0, \omega_0)) \\
   &= H(P(S_n, \omega_n)) - H(P(\omega_0)) \\
  &= H(P(\omega_n  \mid S_n)) + H(S_n) - H(\omega_0)
\label{eq:22222}    
}
EP is given by combining $\Delta H$ and $\bar{Q}$.
\section{Inclusive formulation of a DFA with bi-infinite strings}
\label{sec:app_2}
Here, we extend the formulation in \cref{sec:unilateral_decomposition_dynamics} to bi-infinite strings. In general, recall that the countably infinite state space of the full system is $(\{s, z, \omega\})$ where
$s \in S$ is a state of the DFA, and $\omega$ is a string. As mentioned in the main text, $\omega$ need not be finite but it can also be bi-infinite, $\omega^{\mathbb{Z}}=\left\{y=\left(y_{z}\right)_{z \in \mathbb{Z}}: y_{z} \in A \right.$ for all $\left.z \in \mathbb{Z}\right\}$. Each entry in $\omega$ is still a member of 
a finite alphabet $\Sigma$ that includes a special blank symbol.

Processing a bi-infinite string generates a bi-infinite trajectory over DFA states $\mathcal{S}^{\mathbb{Z}}=\left\{s=\left(s_{z}\right)_{z \in \mathbb{Z}}: s_{z} \in S \right.$ for all $\left.z \in \mathbb{Z}\right\}$. This allows us to extend the formulation in  \cref{sec:inclusive_formulation_DFAs_and_information_sources} by only slightly changing the definition of legal (resp. illegal) triples. 

For bi-infinite strings, we say that a triple $(s, z, \omega)$ is
\textbf{legal} if either of the following conditions is satisfied: $z \leq 0, s = q$, or $(s, z, \omega) = (z, f(s', \omega[z]), \omega)$ for some legal triple, $(z - 1, s', \omega)$. To ensure that the support of the system is restricted to legal triples, we impose that $p(z, s) = 1$ for  $(z, s) = (0, q)$, i.e., 
$P_{0}(s, z, \omega) = 0$ if either $z \neq 0$ 
and/or $s \ne q$.

\section{Interpretation of entropy production}
\label{sec:app_33}

Consider the inclusive Hamiltonian formula for EP
generated between initialization at $t_0$ and some arbitrary $t > t_0$: 
\eq{
\bar{\sigma}_{t_0, t} &= I_{t}(p_t ; \rho_t) - I_{t_0}(p_0; \rho_0) + D(\rho_t || \rho_0) \\
&= \Delta_{t_0, t}I +  D(\rho_{t} || \rho_{t_0})
}
The first term on the RHS in the second equation is additive over successive time intervals, i.e., for any pair of times $t_1< t_2$ both occurring after the initialization time $t_0$, it \textit{is} true that
\eq{
\Delta_{t_0, t_2}I = \Delta_{t_0, t_1}I + \Delta_{t_1, t_2}I 
}
So the first component of the formula for EP is additive. However, the second component is not:
\eq{
D(\rho_{t_2} || \rho_{t_0})\ne D(\rho_{t_2} || \rho_{t_1}) + D(\rho_{t_1} || \rho_{t_0})
}
Hence the entire formula for EP is not additive over
time,
\eq{
\sigma_{t_0, t_2} \ne \sigma_{t_0, t_1} + \sigma_{t_1, t_2}
}

That is in stark contrast to the meaning of EP as dissipated work in the CTMC-based approach to stochastic thermodynamics. It means that it is not correct to identify
$\sigma_{t_0, t_2} - \sigma_{t_0, t_1}$ as the ``EP" in going from $t_1$ to $t_2$.
Indeed, while the full EP 
from $t = t_0$  cannot be negative,
the change in that full EP as we go from $t_1$ to $t_2$ can be negative. 
This would arguably violate the second law
if in fact that change in EP were interpreted as conventional thermodynamic dissipated work. Another example of how problematic it would be to interpret $\sigma_{t_1, t_2}$
as the EP in going from $t_1$ to $t_2$ is that it's not clear what
``mismatch cost contribution to EP in going from iteration $t_1$ to $t_2$'' 
could mean. 

As a final illustration of this phenomenon, note that there is \textit{no} function whose only arguments are the distributions $P(u_{t_1}, v_{t_1})$
and $P(u_{t_2}, v_{t_2})$ that gives
the change in EP between $t_1$ and $t_2$, $\sigma_{t_0, t_2} - \sigma_{t_0, t_1}$. In this sense, change in full EP between those two times differs from the 
full EP of either of those times individually --- those two full EPs
\textit{can} both be calculated from just the associated two distributions. 
Now it would be possible to calculate $\sigma_{t_0, t_2} - \sigma_{t_0, t_1}$ with
a function that has those two distributions as arguments \textit{along with the bath's Hamiltonian function}. But the latter is not needed as an argument for calculating just EP from $t_0$ to $t$, $\sigma_{t_0, t}$, due to the assumption that the bath is in a Boltzmann distribution at $t_0$.

In summary, one 
should not interpret EP in the inclusive Hamiltonian framework as ``dissipated work", but as ``dissipated work \textit{relative to the distribution at $t_0$}. 
Viewed differently, the
EP at time $t_1$ in the inclusive Hamiltonian framework is the total work that has
been \textit{dissipated into the inaccessible degrees of freedom relative to the distribution at} $t_0$ and so cannot be directly recovered at $t_1$. 
Moreover, as mentioned above, this EP  --- this work that has been dissipated into the
inaccessible degrees of freedom --- can decrease between $t_1$ and $t_2 > t_1$.
This reflects the fact that while the engineer cannot access the inaccessible degrees of freedom directly at $t_1$ and recover work, in the time following $t_1$ some of the energy that was put into these inaccessible degrees of freedom
during $[t_0, t_1]$ would be transferred back into the accessible degrees of freedom.

\section{Relations among different approaches}
\label{sec:app_44}
One of the advantages of the inclusive thermodynamics framework is that by ignoring all details of
a physical process besides the computation it achieves, it helps focus on the thermodynamics
properties inherent in just that computation. As an example, there is still misunderstanding among
some researchers concerning the possible thermodynamic advantages of
``logically reversible Turing machines’’, or (logically reversible) ``Toffoli circuits’’~\cite{Landauer1961-nc}. These
systems are used to compute logically irreversible deterministic maps $f: x \in X \rightarrow f(x)$ without requiring
any work. Broadly speaking, to do this they replace the calculation of $f$ with the calculation of
$g : (x \in X, 0) \rightarrow (x, f(x))$, where the initial value $x$ is distributed according to some distribution over inputs, 
$p(x)$. Since $f$ is deterministic, the initial entropy of $(x, 0)$ equals the ending
entropy of $(x, f(x))$, and so the generalized Landauer cost implementing $g$ is zero. In contrast, if $f$ is logically
irreversible, then the entropy of $p(x)$ is greater than the entropy of $p(f(x))$, and so the generalized Landauer
cost of implementing $f$ is greater than $0$. This is interpreted to mean that implementing the reversible 
function $g$ is thermodynamically superior to implementing the irreversible function $f$.

Consider this scenario from the perspective of the inclusive framework however.
The system implementing $g$ has an extra accessible variable which is not accessible to
the system implementing $f$, a variable that must be initialized to $0$. (Formally, $f$ is a map over $X$ 
whereas $g$ is a map over $X^2$.) Therefore to ``compare apples to apples’’,
we should give the system implementing $f$ access to that extra coordinate which is
initialized to $0$, just like the system implementing $g$. Note though that the precise value
of that extra coordinate at the end of the computation is irrelevant when we implement 
the irreversible map $f$, in contrast to the case with $g$. So we can suppose that in implementing 
this variant of $f$, we use the extra, initialized coordinate 
as an information reservoir, thermalizing it as we run the computation. Formally, we replace $f$
with the stochastic map, $f^* : (x \in X, 0) \rightarrow (f(x), y)$, where $y$ is distributed according to 
an appropriate Boltzmann distribution, $P(y)$. In general, the Landauer cost of $f^*$ may
be greater than that of $g$, equal to it — or less than it. So in this more fair comparison, where
both systems being compared have access to the initialized coordinate, there is no \textit{a priori} advantage to using a reversible computer rather than an irreversible
one to implement $f$. 

\bibliography{main}

\end{document}